\algnewcommand{\algorithmicand}{\textbf{AND }}
\algnewcommand{\algorithmicor}{\textbf{OR }}
\algnewcommand{\algorithmicxor}{\textbf{XOR }}
\algnewcommand{\algorithmicnot}{\textbf{NOT }}
\algnewcommand{\OR}{\algorithmicor}
\algnewcommand{\AND}{\algorithmicand}
\algnewcommand{\XOR}{\algorithmicxor}
\algnewcommand{\NOT}{\algorithmicnot}
\algnewcommand{\var}{\texttt}
\algnewcommand{\algorithmicbreak}{\textbf{break}}
\algnewcommand{\Break}{\algorithmicbreak}
\DeclarePairedDelimiter\abs{\lvert}{\rvert}%
\DeclarePairedDelimiter\norm{\lVert}{\rVert}%
\let\oldabs\abs
\def\abs{\@ifstar{\oldabs}{\oldabs*}}
\let\oldnorm\norm
\def\norm{\@ifstar{\oldnorm}{\oldnorm*}}
\theoremstyle{plain}
\newtheorem{theorem}{Theorem}[section]
\newtheorem{lemma}[theorem]{Lemma}
\newtheorem{fact}[theorem]{Fact}
\theoremstyle{definition}
\def\ShowAuthNotes{1}
\newcommand{\authnote}[2]{\ \\ \textcolor{red}{\parbox{0.9\linewidth}{[{\footnotesize {\bf #1:} { {#2}}}]}}\newline}
\newcommand{\authnote}[2]{}
\renewcommand{\epsilon}{\varepsilon}
\newcommand{\eps}{\varepsilon}
\renewcommand{\tilde}{\widetilde}
\newcommand{\union}{\cup}
\newcommand{\intersect}{\cap}
\newcommand{\floor}[1]{\left\lfloor #1 \right\rfloor}
\newcommand{\ceil}[1]{\left\lceil #1 \right\rceil}
\DeclareMathOperator*{\argmin}{arg\,min}
\newcommand{\cg}{\textsc{Congestion}\xspace}
\newcommand{\ver}{{V}\xspace}
\newcommand{\e}{{E}\xspace}
\title{Fully Dynamic All-Pairs Shortest Paths: Likely Optimal Worst-Case Update Time}
\author{  Xiao Mao \\  Stanford University \\  \texttt{xiaomao@stanford.edu} \\}
\date{}
\begin{document}
	\maketitle
	\thispagestyle{empty}
	\begin{abstract}
	The All-Pairs Shortest Paths (APSP) problem is one of the fundamental problems in theoretical computer science. It asks to compute the distance matrix of a given $n$-vertex graph. We revisit the classical problem of maintaining the distance matrix under a \emph{fully dynamic} setting undergoing vertex insertions and deletions with a fast \emph{worst-case} running time and efficient space usage.
	
	Although an algorithm with amortized update-time $\tilde O(n ^ 2)$\footnote{Throughout this paper, we use $\tilde O(f)$ to denote $O(f\cdot \text{poly} \log (f))$. } has been known for nearly two decades [Demetrescu and Italiano, STOC 2003], the current best algorithm for worst-case running time with efficient space usage runs is due to [Gutenberg and Wulff-Nilsen, SODA 2020], which improves the space usage of the previous algorithm due to [Abraham, Chechik, and Krinninger, SODA 2017] to $\tilde O(n ^ 2)$ but fails to improve their running time of $\tilde O(n ^ {2 + 2 / 3})$. It has been conjectured that no algorithm in $O(n ^ {2.5 - \epsilon})$ worst-case update time exists. For graphs without negative cycles, we meet this conjectured lower bound by introducing a Monte Carlo algorithm running in randomized $\tilde O(n ^ {2.5})$ time while keeping the $\tilde O(n ^ 2)$ space bound from the previous algorithm. Our breakthrough is made possible by the idea of ``hop-dominant shortest paths,'' which are shortest paths with a constraint on hops (number of vertices) that remain shortest after we relax the constraint by a constant factor.
	\end{abstract}
	\newpage

\section{Introduction}
The All-Pairs Shortest Paths (APSP) problem asks to compute the shortest distance between each pair of vertices on an $n$-vertex graph, called the \emph{distance matrix}. Despite being one of the most widely studied problems in theoretical computer science, on graphs with general weights the fastest algorithm still runs in $n ^ 3 / 2 ^ {\Omega(\sqrt{\log {n}})}$ time \cite{rrwapsp2014}, and thus it has been conjectured that an algorithm running in $O(n ^ {3 - \eps})$ time is impossible for any $\eps > 0$. However, under many \emph{dynamic} settings of the APSP problem where the graph undergoes modifications such as vertex/edge insertion/deletions, the distance matrix can be updated in a faster time than a cubic re-computation from scratch. 

\paragraph{The Problem.} In this paper, we study the fully dynamic setting under vertex insertion/deletions with a faster worst-case time. Formally, we are given a graph $G := \langle \ver, \e \rangle$ with $\abs{\ver} = n$ that undergoes vertex insertions and deletions where each vertex is inserted/deleted together with its incident edges. Our goal is to minimize the \emph{update time}, which is the time needed to refresh the distance matrix. Specifically, since APSP without any updates already takes $\tilde O(n ^ 3)$ time, we allow a preprocessing procedure that takes place before the first update whose running time is not part of the update time since it does not ``refresh'' the distance matrix. We measure the worst-case update time which is opposed to amortized update time. In the randomized setting, the worst-case update time is a time bound under which \emph{every} update can finish with the correct answer with a probability of $1 - n ^ {-c}$ for some constant $c > 0$.

We point out that in the fully dynamic setting, vertex updates are always more general than edge updates since any edge update can be easily simulated by a deletion and a re-insertion of one of the two incident vertices.

\paragraph{Previous Work.}Algorithms for dynamic All-Pairs Shortest Path Problem have been known since the 1960s, the earliest partially dynamic algorithm was the well-known Floyd-Warshall Algorithm \cite{floyd,warshall} from as early as 1962\footnote{The algorithm is essentially the same as earlier algorithms from 1959 by Bernard Roy \cite{frenchpaperfrom1959}.} that can be easily extended to handle vertex insertion for $O(n ^ 2)$ time per update given the distance matrix of the current $n$-vertex graph. It was not until the end of that century that the first fully dynamic algorithm was given by King \cite{king99}, with $\tilde O(n ^ {2.5}\sqrt{W \log n})$ amortized update time per \emph{edge} insertion/deletion where $W$ is the largest edge weight. Their algorithm was based on a classic data structure for decremental Single-Source Shortest Paths by Even and Shiloach \cite{Even1981AnOE}. Later, King and Thorup \cite{king01} improved the space bound from $\tilde O(n ^ 3)$ to $\tilde O(n ^ {2.5}\sqrt{W})$. The follow-up work by Demetrescu and Italiano \cite{demetrescu02, demetrescu06} generalized the result to real edge weights from a set of size $S$. In 2004, a breakthrough by Demetrescu and Italiano \cite{demetrescu04} gave an algorithm with $\tilde O(n ^ 2)$ amortized update time per \emph{vertex} insertion/deletion using $\tilde O(n ^ 3)$ space. Thorup \cite{amortized} simplified their approach, shaved some logarithmic factors, and extended it to handle negative cycles. Based on his approach, he developed the first fully dynamic algorithm with a better \emph{worst-case} update time for vertex insertion/deletions than recomputation from scratch with $\tilde O(n ^ {2 + 3/4})$ time per update, using a space super-cubic in $n$. In 2017, Abraham, Chechik and Krinninger \cite{Abraham2017FullyDA} improved the space to $O(n ^ 3)$ and designed a randomized algorithm with $\tilde O(n ^ {2 + 2 / 3})$ worst-case update time per vertex update against an adaptive adversary. The current state-of-the-art algorithms are by Gutenberg and Wulff-Nilsen in 2020 \cite{previous}. They achieved $\tilde O(n ^ {2 + 4 / 7})$ deterministic update time but failed to improve the randomized update time and instead only managed to improve space usage to $\tilde O(n ^ 2)$. Interestingly, their algorithm is a Las Vegas algorithm while both our algorithm and Abraham, Chechik and Krinninger's algorithm are Monte Carlo algorithms\footnote{The difference between Monte Carlo algorithms and Las Vegas algorithms is that although both types of algorithms may have randomized running times, the former may be incorrect with a small probability while the latter are always correct.}. For deterministic algorithms, Chechik and Zhang recently proposed a deterministic algorithm that runs in $\tilde O(n ^ {2 + 41 / 61})$ time \cite{doi:10.1137/1.9781611977554.ch4}. The history of the fully dynamic APSP problem is further summarized in Table \ref{table:apsp}. 
\begin{table}[h]
\begin{center}
\begin{tabular}{ |c|c|c|c|c|c|c|c| }
 \hline
     Time & Space & VU & WC & Type & $\textrm{C} ^ {-}$ & Authors & Year\\
 \hline
 $\tilde O(n ^ {2.5}\sqrt{W})$    & $\tilde O(n ^ 3)$     & N    & N & Det & N  & King \cite{king99}  & 1999 \\
 $\tilde O(n ^ {2.5}\sqrt{W})$  & $\tilde O(n\sqrt{W})$ & N    & N  & Det & N  & King and Thorup \cite{king01}  & 2001 \\
 $\tilde O(n ^ {2.5}\sqrt{S})$  & $\tilde O(n\sqrt{S})$ & N     & N  & Det & Y  & Demetrescu and Italiano \cite{demetrescu02}  & 2002 \\
 $\tilde O(n ^ 2)$              & $\tilde O(n ^ 3)$     & Y  & N  & Det & N  & Demetrescu and Italiano \cite{demetrescu04}  & 2004 \\
 $\tilde O(n ^ 2)$              & $\tilde O(n ^ 3)$    & Y  & N  & Det & Y & Thorup \cite{amortized}  & 2004 \\
 $\tilde O(n ^ {2 + 3 / 4})$    & $\omega (n ^ 3)$      & Y  & Y & Det & N  & Thorup \cite{thorup05}   & 2005 \\
 $\tilde O(n ^ {2 + 3 / 4})$    & $O(n ^ 3)$            & Y  & Y & Det & N  & Abraham, Chechik and Krinninger \cite{Abraham2017FullyDA}   & 2017 \\
 $\tilde O(n ^ {2 + 2 / 3})$    & $O(n ^ 3)$            & Y  & Y & MC  & N  & Abraham, Chechik and Krinninger \cite{Abraham2017FullyDA}   & 2017 \\
 $\tilde O(n ^ {2 + 5 / 7})$    & $\tilde O(n ^ 3)$     & Y  & Y & Det & Y & Gutenberg and Wulff-Nilsen \cite{previous}  & 2020 \\
 $\tilde O(n ^ {2 + 2 / 3})$    & $\tilde O(n ^ 2)$     & Y  & Y & LV & Y & Gutenberg and Wulff-Nilsen \cite{previous}  & 2020 \\
 $\tilde O(n ^ {2 + 41 / 61})$    & $\tilde O(n ^ {3+5/61})$     & Y  & Y & Det & Y & Chechik and Zhang \cite{doi:10.1137/1.9781611977554.ch4}  & 2023 \\
 $\tilde O(n ^ {2.5})$          & $O(n ^ 2)$     & Y  & Y & MC & N & \textbf{This paper}  &  \\
 \hline
\end{tabular}
\end{center}
\caption{Fully dynamic APSP algorithms, where ``VU'' stands for ``vertex updates'' (as opposed to edge updates), ``WC'' stands for ``worst-case'' (as opposed to amortized), ``Det/MC/LV'' stands for ``deterministic/Las Vegas/Monte Carlo,'' and ``$\textrm{C} ^ {-}$'' stands for ``(allow) negative cycles.''} \label{table:apsp}
\end{table}

It has been observed by Abraham, Chechik and Krinninger that $\tilde O(n ^ {2.5})$ seems to be a natural barrier for the worst-case update time \cite{Abraham2017FullyDA}. It is a natural question whether there is an algorithm that meets this natural barrier and whether there are impossibility results. Our result gives an algorithm that meets this natural barrier while preserving the quadratic space usage.

\paragraph{Our Result.} Suppose that the graph does not contain a negative cycle at any time. Our main result is a randomized algorithm with a running time that meets the natural barrier of $\tilde O(n ^ {2.5})$.
 
\begin{theorem} \label{theo:main}
    Let $G$ be an $n$-vertex edge-weighted directed graph undergoing vertex insertions and deletions with no negative cycles at any time. There exists a Monte-Carlo data structure that can maintain distances in $G$ between all pairs of vertices in randomized worst-case update time $\tilde O(n ^ {2.5})$ with high probability\footnote{In this paper, the term ``high probability'' implies a probability of at least $1 - n ^ {-c}$ for any constant $c > 0$.}, using $\tilde O(n ^ 2)$ space, against an adaptive adversary.
\end{theorem}

\subsection*{Related Results}
\paragraph{Partially Dynamic Algorithms.} In the partially dynamic model only one type of update is allowed. The \emph{incremental} model is restricted to insertions and the \emph{decremental} model is restricted to deletions. The amortized update time bounds for partially dynamic algorithms often depend on the size of the graph rather than the number of updates. Thus it is often convenient to report the \emph{total update time}, which is the sum of the individual update times. In the incremental setting, the earliest partially dynamic algorithm was the well-known Floyd-Warshall Algorithm \cite{floyd,warshall} from as early as 1962 that can be easily extended to handle vertex insertion for $O(n ^ 2)$ time per update, which means $O(n ^ 3)$ total update time for $n$-node graphs. Apart from this, there is an algorithm by Ausiello, Italiano, Spaccamela and Nanni \cite{10.5555/320176.320178} for edge insertions that has a total update time of $O(n ^ 3W\log{(nW)})$ where $W$ is the largest edge weight. In the decremental setting, the fastest algorithm for vertex deletions runs in $O(n ^ 3\log n)$ total update time \cite{demetrescu04}, while the fastest algorithm for edge deletions runs in randomized $O(n ^ 3\log ^ 2n)$ total update time for unweighted graphs \cite{Baswana2003MaintainingAA}, and $O(n ^ 3S\log ^ 3n)$ total update time for weighted graphs \cite{demetrescu06} if edge weights are chosen from a set of size $S$.

\paragraph{Approximation Algorithms.} The problem of dynamic All-Pairs Shortest Paths in the approximate setting has been studied in various dynamic graph settings \cite{baswana02,demetrescu04,roddity04,bernstein09,zwick04,abraham13,bernstein16,henzinger16,chechik18,8948617,Karczmarz2019ReliableHF, doi:10.1137/1.9781611975994.154, doi:10.1137/1.9781611976014.15, chuzhoy21, evald_et_al:LIPIcs.ICALP.2021.64, bernstein2022deterministic}. In the setting of $(1 + \epsilon)$-approximate shortest paths on an $n$-vertex, $m$-edge graph, although many algorithms achieve amortized update time $\tilde O(m / \epsilon)$, for the worst-case running time, the only of these algorithms that give a better guarantee than the trivial $\tilde O(mn / \epsilon)$ bound is the one in \cite{8948617} with a running time of $\tilde O(n ^ {2.045} / \epsilon ^ 2)$ on digraphs with positive edge weights based on fast matrix multiplication. There is also a work studying the related problem of fully dynamic minimum weight cycle under vertex updates in the approximate settings \cite{Karczmarz2021FullyDA}.

\paragraph{Special Graphs.}
There are also many results on fully dynamic algorithms for planar \cite{kleinplanar98, henzinger97, rao06, abraham12} and undirected graphs \cite{zwick04, bernstein09, bernstein16, abraham14, henzinger16, chuzhoy21}. For sparse graphs. recently, Karczmarz and Sankowski developed a fully dynamic algorithm for vertex updates on a $n$-vertex $m$-edge graph with real weights with $\tilde O(mn ^ {4 / 5})$ worse case update time and $\tilde O(n ^ {4 / 5})$ time for a single $s$-$t$ distance query \cite{karczmarz_et_al:LIPIcs.ICALP.2023.84}.

\section{Techniques Overview}
\subsection{The Current Framework and a Likely Lower Bound}
Consider a much-simplified version of our problem: suppose that we know all the upcoming deletions, but not the insertions. It turns out that if we are given an oracle that tells us on which vertices the next $2n ^ {0.5}$ deletions are, there is an approach that achieves $\tilde O(n ^ {2.5})$ worst-case update time. 

The approach is as follows. After every $n ^ {0.5}$ updates, we take a snapshot of the current graph and compute the distance matrix of the graph resulting from doing all $2n ^ {0.5}$ future deletions from the snapshot. This takes $\tilde O(n ^ 3)$ time in total. However, a standard de-amortization technique is to divide the running time into $n ^ {0.5}$ chunks and distribute them equally to the next $n ^ {0.5}$ updates. This way each update still takes $\tilde O(n ^ {2.5})$ worst-case update time. At every update, suppose the computation of the penultimate distance matrix was started at $D$ updates ago. Then we will have $n ^ {0.5} \le D \le 2n ^ {0.5}$. This distance matrix has therefore been fully computed and it is the distance matrix of some graph that is a sub-graph of the current graph with at most $3n ^ {0.5}$ missing vertices from extra deletions and missing insertions. We can reinsert these missing vertices using the Floyd-Warshall Algorithm in $O(n ^ 2)$ time per insertion, which still takes only $O(n ^ {2.5})$ time in total.

Despite its reliance on an oracle, this simple approach's ideas of preprocessing, batch deletion, and batch insertion serve as the basis for all of the current line of algorithms achieving a worst-case running time better than cubic, including ours. Since Thorup's first data structure achieving a worst-case update time of $\tilde O(n ^ {2 + 3 / 4})$, we have essentially been using an algorithm consisting of two steps:
\begin{itemize}
    \item Preprocessing: take a snapshot of the current graph, and build a data structure so that we can deal with at most $\Delta$ upcoming deletions efficiently.
    \item Batch deletion: return the distance matrix after removing at most $\Delta$ vertices from the last snapshot.
\end{itemize}
As before, the running time for preprocessing can be split into chunks to guarantee worst-case update time. At each update, we run the batch deletion to deal with all the deletions since the last snapshot, and then we re-insert the missing vertices due to extra deletions and missing insertions using the Floyd-Warshall algorithm.

It is reasonable to believe that $\tilde O(n ^ {2.5})$ is the best running time one can achieve within such a framework. Moreover, it seems pretty hard for any algorithm to get around this $O(n ^ {2.5})$ barrier as an algorithm that breaks this barrier seems elusive even in the presence of an oracle with information on all future deletions, which makes it reasonable to conjecture that no data structure can achieve a worst-case update time with running time $O(n ^ {2.5 - \eps})$ for any $\eps > 0$. Since the current best algorithm has a worst-case update time of $\tilde O(n ^ {2 + 2 / 3})$. It is a very important open problem whether there exists an algorithm that matches the running time barrier of $\tilde O(n ^ {2.5})$, to which our paper gives a positive answer. To introduce our new techniques for achieving such a running time, we first give a brief description of Gutenberg and Wulff-Nilsen's data structure.

In the preprocessing step, their algorithm finds a small set $C$ of ``congested vertices" and precomputes a set of paths. They do it in a way such that if we remove some small set $D$ of vertices from our graph in the batch deletion step, we have a way to efficiently concatenate some surviving precomputed paths so that we can recover pair-wise paths in $G \backslash D$ that are at most as long as the corresponding shortest paths in $G \backslash (C \union D)$. For each update, we can call the batch deletion step which recovers these pair-wise paths, and then if we reinsert the vertices in $C \backslash D$, we recover path-wise shortest paths in $G \backslash D$. In order to be able to efficiently recover new paths using precomputed paths, they make use of the fact that finding the best concatenation of paths with more hops (number of vertices) can be sped up with hitting sets. To harmonize with this idea, in the processing step, they pre-compute pair-wise \emph{hop-restricted} paths with different scales of hop restrictions with congestion incurred by a path defined to be proportional to the time needed to recover a path with the same hop. Reinserting the vertices in $C \backslash D$ turns out to be inefficient. A better strategy is to maintain pairwise shortest paths that pass through at least one center in $C$ subject to batch deletion of $D$. To achieve maximum efficiency, they extended this idea to multilayer. Every layer computes the set of congested vertices and delegates the job of maintaining the paths through the congested vertices to the layer below. 

\subsection{Breakthrough Using Hop-Dominant Paths}
The barrier in both Abraham, Chechik and Krinninger's data structure and Gutenberg and Wulff-Nilsen's data structure comes from the following \emph{hop-restricted shortest path} problem: given a source vertex $s$ in the graph and a \emph{hop limit} $h$, we want to know for each other vertex $t$ the shortest $s$-$t$ path that visits at most $h$ edges (called $h$-hop shortest paths). Unfortunately, for this problem, we are yet to do better than the classic solution using the Bellman-Ford algorithm in $O(n ^ 2h)$ time. In our breakthrough, we go around this barrier by only computing \emph{hop-dominant paths}, which are $h$-hop shortest paths that are also $2h$-hop shortest (i.e. no path with a hop in $[h + 1, 2h]$ is shorter). Our core idea is that for the same set of congested vertices, we can recover all-pair shortest paths subject to vertex deletions by efficiently concatenating these hop-dominant paths as well. Fortunately, it turns out that hop-dominant paths can be computed in $O(n ^ 2)$ time (\cref{sec:ssahdp}), which speeds up our computation.

However, simply replacing the hop-restricted shortest paths with hop-dominant paths does not readily give us an improvement. Such replacement does give us a set of congested vertices and a set of hop-dominant paths, but it seems impossible to implement the batch deletion procedure to obtain the same running time guarantee (\cref{sec:fail}). We deal with this issue using two steps. In the first step (\cref{sec:step1}), we show that we can obtain the set of congested vertices using hop-restricted paths in an approximate way that does not require us to actually compute all hop-restricted paths. We can show that this gives us an efficient batch deletion procedure that requires an oracle. In the second step (\cref{sec:step1}), we show how we can efficiently concatenate hop-dominant paths without using the oracle.

\paragraph{Randomization on a Sequential Sub-Routine.} In the previous algorithm, they obtain hop-restricted paths as well as the congested vertices altogether using a sequential sub-routine. This sub-routine computes single-source hop-restricted paths from each vertex in some order and modifies the graph on the fly based on newly congested vertices. The technique we employ to speed up the previous sub-routine is by applying randomization. Instead of going through the vertices one by one, we randomly select some vertices, compute only the single-source hop-restricted paths from these vertices, and magnify the modification on the graph so that it approximates the original deterministic procedure (\cref{sec:randomization}). Our argument here is that even if we skip the majority of the vertices, our randomized sub-routine is still a good approximation of the original deterministic sub-routine.

\paragraph{Construction for Efficient Concatenations.} We show that if we are given an oracle that tells us the final length and hop of each shortest $s$-$t$ path, we can efficiently concatenate hop-dominant paths to ``witness'' these shortest paths. To get rid of this oracle, we categorize the recovered paths into two types. The first type is those that are concatenations of exactly two hop-dominant paths, and the second type is those that are concatenations of more than two hop-dominant paths. We deal with the first type using a copy of Gutenberg and Wiff-Nilsen's data structure with some modification (\cref{sec:typei}), and the second type with a Dijkstra-like sub-routine (\cref{sec:typeii}).

\section{Preliminaries}
In this paper, we will largely inherit the notations used in \cite{previous}. We denote by $G := \langle \ver, \e \rangle$ the input digraph where the edges are associated with a weight function $w: \e \rightarrow \mathbb{R}$. For the input graph $G$, we use $n$ to denote $\abs{\ver}$ and $m$ to denote $\abs{\e}$. For a subset $\ver ^ {\prime} \subset \ver$, we denote by $G[\ver ^ {\prime}]$ the induced subgraph of $\ver ^ {\prime}$. For $D \subset \ver$ we write $G \backslash D$ as a shorthand for $G[\ver \backslash D]$. We let $\overleftarrow{G}$ be the digraph identical to $G$ but with all edge directions reversed. For two vertices $(u, v) \in \ver ^ 2$, let $\langle u, v \rangle$ denote the edge from $u$ to $v$, and let $w(u, v)$ denote the weight of $\langle u, v \rangle$.

For a path $p$ from $s$ to $t$ containing $h$ edges and $h + 1$ vertices, we let $p[i]$ be the $i$-th vertex on $p$ (0-based) where $p[0] = s$ and $p[h] = t$. We can describe $p$ by $\langle p[0], p[1], \cdots, p[h] \rangle$. For $0 \le l \le r \le h$ we write $p[l, r]$ to denote the segment of $p$ from the $l$-th to the $r$-th vertex. $h$ is called the \emph{hop} of $p$ and is denoted by $\abs{p}$. The \emph{weight} or \emph{length} of $p$, $w(p)$, is defined as the sum of the weights of the edges on $p$. We say a path is shorter or longer than another by comparing the lengths (not the hops) of the two paths. We use $\perp$ to denote an empty path, whose weight is defined to be $\infty$. Given two paths $p_1$ and $p_2$ such that $p_1[\abs{p_1}] = p_2[0]$, we denote by $p_1 \circ p_2$ the \emph{concatenated path} $\langle p_1[0], p_1[1], \cdots, p_1[\abs{p_1}] = p_2[0], p_2[1], \cdots, p_2[\abs{p_2}] \rangle$. Specifically, for any path $p$ we define $p\ \circ \perp := \perp \circ\ p := \perp$. We use $v \in p$ to denote when a vertex $v$ is on the path, or equivalently, $v = p[k]$ for some $k \in [0, h]$, and we let the \emph{prefix} $p[<v]$ be the segment of the path before $v$, or equivalently the path $p[0, k]$, and the \emph{suffix} $p[>v]$ be the segment after $v$, or equivalently $p[k, h]$. We let $\overleftarrow{p}$ be the reversed path $\langle p[h], p[h - 1], \cdots, p[0] \rangle$ in $\overleftarrow{G}$.

For a path $p$ from $s$ to $t$. We say $p$ is the \emph{shortest path} in $G$ if the weight of $p$ is minimized over all $s$-$t$ paths. $p$ is an $h$\emph{-hop-restricted path} (sometimes shortened to $h$-hop path) if its hop does not exceed $h$. $p$ is the $h$\emph{-hop-shortest path} if the weight of $p$ is the minimized over all $h$-hop-restricted $s$-$t$ paths. For a set of vertices $C$, we say a path is through $C$ if the path visits at least one vertex in $C$.

For our convenience, from now on we assume the size of the graph $n$ is at least a large constant.

\paragraph{Uniqueness of Shortest Paths.}In this paper we assume that for every $h$, the $h$-hop shortest paths between a pair of vertices are unique. We shall describe how this is achieved in \cref{sec:breaktie}.

\subsection*{Ingredients}
We will use the following well-known result for inserting vertices:
\begin{lemma} [See Floyd-Warshall Algorithm \cite{floyd, warshall}] \label{lemma:floyd}
    Given a graph $G \backslash C$, where $C \subset V$ of size $\Delta$, and given its distance matrix $D$ in $G \backslash C$. There is an algorithm that computes the distance matrix of $G$ in $O(\Delta n ^ 2)$ time and $O(n ^ 2)$ space.
\end{lemma}

Given a hop limit $h$, the single source $h$-hop shortest paths can be computed in $O(n ^ 2h)$ time using the famous Bellman–Ford algorithm. We can modify the algorithm so that it computes the shortest paths through a set of centers:
\begin{lemma}
    Given a graph $G := \langle \ver, \e \rangle$ of $n$ vertices. Let $C \subset \ver$ be a set of centers. For a starting vertex $s$ and a hop limit $h$, there is a procedure $\textsc{BellmanFordThroughCenter}(s, G, C, h)$ that computes for each destination $t$ the shortest among paths between $s$ and $t$ through $C$ with a hop at most $h$ in $\tilde O(n ^ 2h)$ time and $\tilde O(n ^ 2)$ space. 
\end{lemma}
\begin{proof}
    In order to force paths through a center in $C$, we split each vertex $v$ into $(v, 0)$ and $(v, 1)$ where the second parameter is $1$ if and only if we visited at least one vertex in $C$. We copy each edge from $u$ to $v$ into two edges from $(u, 0)$ to $(v, 0)$ and $(u, 1)$ to $(v, 1)$, respectively. If $v \in C$, we also add an edge from $(u, 0)$ to $(v, 1)$. To force the paths to pass through $C$, we can simply run the algorithm on this new graph from $(s, 0)$ and for each destination $t$, we take the final distance to $(t, 1)$.
\end{proof}
The output paths can easily be stored in $O(nh)$ space. In Gutenberg and Wulff-Nilsen's paper, they introduced a data structure that can store all the output paths in $\tilde O(n)$ space, but this will be unnecessary for our algorithm.

We will use a well-known fact that good hitting sets can be obtained by random sampling. This technique was first used in the context of shortest paths by Ullman and Yannakakis \cite{ullman}. The fact can be formalized in the following lemma:
\begin{lemma} [See \cite{ullman}]\label{lemma:hsrd}
    Let $N_1, N_2, \cdots, N_n \subset U$ be a collection of subsets of $U$, with $u := \abs{U}$ and $\abs{N_i} \ge s$ for all $i \in [1, n]$. For any constant $a \ge 1$, let $A$ be a subset of $U$ that was obtained by choosing each element of $U$ independently with probability $p := \min(x / s, 1)$ where $x := a \ln{(un)} + 1$. Then with a probability of at least $1 - 1 / n ^ a$, the following two properties hold:
    \begin{itemize}
        \item $N_i \cap A \ne \emptyset$ for all $i$;
        \item $\abs{A} \le 3xu / s = O(u \ln{(un)} / s) = \tilde O(u / s)$.
    \end{itemize}
\end{lemma}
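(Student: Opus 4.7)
The plan is to bound the failure probabilities of the two listed properties separately and combine them with a union bound. For the hitting property, I first dispatch the trivial case $p = 1$, in which $A = U$ contains every $N_i$. Otherwise $p = x/s < 1$, and independence of the sampling gives, for each fixed index $i$,
\[
\Pr[N_i \cap A = \emptyset] = (1-p)^{|N_i|} \leq (1-p)^s \leq e^{-ps} = e^{-x}.
\]
Substituting $x = a\ln(un) + 1$ yields $e^{-x} = e^{-1}(un)^{-a}$, so a union bound over the $n$ subsets bounds the probability that some $N_i$ is missed by $n \cdot e^{-1}(un)^{-a}$. The purpose of the ``$+1$'' in the definition of $x$ is precisely to supply the extra $1/e$ factor needed to absorb the union bound, driving the total below half of the failure budget $1/n^a$.

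For the size property, I express $|A| = \sum_{v \in U} X_v$ as a sum of independent Bernoullis of mean $p$, so $\Ex[|A|] = up$. If $p = 1$ then $x/s \geq 1$ forces $s \leq x$, hence $|A| = u \leq xu/s \leq 3xu/s$ deterministically. Otherwise $\Ex[|A|] = xu/s$, and a multiplicative Chernoff bound with deviation parameter $\delta = 2$ gives
\[
\Pr\!\left[|A| \geq 3 \cdot \frac{xu}{s}\right] \leq \left(\frac{e^2}{27}\right)^{xu/s} \leq \left(\frac{e^2}{27}\right)^{x},
\]
which is polynomially small in $un$ because $x = \Theta(\ln(un))$, and in particular below the remaining half of the failure budget. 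A union bound with the hitting analysis then produces the overall $1 - 1/n^a$ success probability claimed.

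The main ``obstacle'' is essentially pure bookkeeping: this is a textbook application of the probabilistic method and there is no conceptual subtlety. The two places that do require mild care are the edge case $p = 1$, where the size inequality goes through only because the definition of $p$ already implies $s \leq x$, and the calibration of constants, namely choosing $\delta = 2$ in the Chernoff step so that the factor $3$ in $3xu/s$ emerges cleanly from $1+\delta$ while still leaving enough decay in the tail to absorb the union bound inside the budget $1/n^a$.
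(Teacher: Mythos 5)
Your proof follows the standard random-sampling hitting-set argument, which is what the paper implicitly relies on; the paper itself gives no proof and simply cites Ullman--Yannakakis, so there is nothing in the paper to compare against line by line. The structure (case-split on $p=1$ versus $p<1$, the bound $(1-p)^{|N_i|}\le(1-p)^s\le e^{-ps}=e^{-x}$, a union bound over the $n$ sets, and a multiplicative Chernoff bound with $\delta=2$ for the size) is exactly the folklore derivation, and the algebra in the $p=1$ case and the Chernoff step is correct. You also correctly use $u\ge s$ to pass from $(e^2/27)^{ux/s}$ to $(e^2/27)^{x}$.

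One place where your accounting is slightly optimistic is the claim that the hitting-side union bound lands ``below half of the failure budget $1/n^a$.'' What you actually derive is $n e^{-x}=e^{-1}n^{1-a}u^{-a}$, and the inequality $e^{-1}n^{1-a}u^{-a}\le\tfrac{1}{2}n^{-a}$ rearranges to $e\,u^a\ge 2n$, which the hypotheses do not guarantee when $u$ is much smaller than $n$ (it is possible to have $p<1$ with, say, $u\approx\sqrt{n}$ and $a=1$). In the regime the paper actually uses the lemma, $U$ is the vertex set so $u=n$, and then $n e^{-x}=e^{-1}n^{1-2a}\le\tfrac{1}{2}n^{-a}$ holds for every $a\ge1$ and $n\ge2$, so the bound you want is fine there. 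This is really a looseness in the stated constants of the lemma (the ``$+1$'' buys a factor $e^{-1}$, which covers the $u\ge n$ case but not arbitrary $u$) rather than a flaw in your approach; if you want the lemma to be robust as stated, the clean fix is to take $x=(a+1)\ln n + a\ln u + 1$ or similar so that $n e^{-x}\le e^{-1}n^{-a}$ unconditionally. Everything else in your argument is sound.
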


We will also use the following lemma that efficiently computes the shortest paths that have a hop no less than a threshold $h$.
\begin{lemma} [See \cite{Abraham2017FullyDA}] \label{lemma:long}
    Given a graph $G := \langle \ver, \e \rangle$ of $n$ vertices with non-negative edge weights, for any $h > 0$, there exists a randomized procedure $\textsc{RandGetShortestPaths}(G, h)$ running in $\tilde O(n ^ 3 / h)$ time using $\tilde O(n)$ space that computes a distance matrix $A = a_{s, t}$ such that for every $(s, t) \in \ver ^ 2$:
    \begin{itemize}
        \item If we let ${\pi}_{s, t}$ be the shortest $s$-$t$ path, then $w({\pi}_{s, t}) \le A_{s, t}$.
        \item If we let ${\Pi}_{s, t}$ be the shortest $s$-$t$ path with a hop of at least $h$, then $A_{s, t} \le w({\Pi}_{s, t})$ with high probability.
    \end{itemize}
\end{lemma}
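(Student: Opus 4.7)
The plan is to apply the Ullman--Yannakakis random sampling trick (cf.\ Lemma \ref{lemma:hsrd}). First I would sample a set $A \subseteq \ver$ by including each vertex independently with probability $p = (c + 3)\ln n / h$ for a sufficiently large constant $c$; a Chernoff bound then gives $\abs{A} = \tilde O(n / h)$ with high probability. From each $v \in A$, I would run Dijkstra in $G$ and once more in $\overleftarrow{G}$ to obtain the distance arrays $d(v, \cdot)$ and $d(\cdot, v)$ together with the forward and reverse shortest-path trees rooted at $v$. Non-negativity of the edge weights is exactly what makes Dijkstra applicable here. On a dense graph each Dijkstra call costs $O(n^2)$, so summing over $v \in A$ the preprocessing finishes in $\tilde O(n^3/h)$ time and stores $O(n \cdot \abs{A}) = \tilde O(n^2 / h)$ words. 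The output path $\Pi_{s,t}$ is defined implicitly as the concatenation $s \to v^\star \to t$, where $v^\star = \mathrm{argmin}_{v \in A}\,(d(s,v) + d(v,t))$; it can be reconstructed on demand by walking two of the stored trees.

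For correctness I would fix $(s,t)$ and let $P_{s,t}$ be the shortest $s$-$t$ path of hop at least $h$ (if none exists, the statement is vacuous). Then $P_{s,t}$ contains at least $h + 1$ vertices, so the probability that $A$ misses all of them is at most $(1 - p)^{h+1} \le e^{-ph} \le n^{-(c+3)}$. A union bound over the $n^2$ pairs shows that, with probability at least $1 - n^{-(c+1)}$, every relevant $P_{s,t}$ is simultaneously hit by some $v \in A$. Splitting $P_{s,t}$ at such a $v$ into a prefix $P_1$ and suffix $P_2$, the trivial bounds $d(s,v) \le w(P_1)$ and $d(v,t) \le w(P_2)$ yield
\[
    w(\Pi_{s,t}) \;\le\; d(s,v^\star) + d(v^\star,t) \;\le\; d(s,v) + d(v,t) \;\le\; w(P_1) + w(P_2) \;=\; w(P_{s,t}),
\]
which is the promised inequality.

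The main obstacle I anticipate is not correctness but staying within the prescribed $\tilde O(n^2/h)$ space bound: naively tabulating the best witness $v^\star(s,t)$ for every pair already costs $\Omega(n^2)$ words. The resolution is that $\Pi$ is returned as an implicit data structure encoded by the $\tilde O(n/h)$ sampled Dijkstra trees, with no persistent per-pair storage; any particular $\Pi_{s,t}$ is recovered by scanning the $\tilde O(n/h)$ candidates for $v^\star$ and then walking two trees. A minor subtlety worth flagging is that $P_{s,t}$ need not be a globally shortest $s$-$t$ path, so the prefix $P_1$ may fail to be the shortest $s$-$v$ path; however, the argument only uses the one-sided inequality $d(s,v) \le w(P_1)$ (which holds for any $s$-$v$ walk), so this causes no difficulty.
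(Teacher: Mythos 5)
Your proposal is correct and follows essentially the same route as the paper's own sketch: both build a random hitting set of size $\tilde O(n/h)$ (the paper via Lemma~\ref{lemma:hsrd}, you via an explicit Chernoff/union-bound calculation which amounts to the same thing), run Dijkstra in $G$ and $\overleftarrow{G}$ from each sampled vertex, combine forward and reverse distances through the sampled set, and store only the $\tilde O(n/h)$ shortest-path trees so that space stays at $\tilde O(n^2/h)$ with individual paths recovered on demand. Your explicit remark that $\Pi$ must be returned implicitly (rather than tabulating a per-pair witness) is a clean way to justify the space bound the paper only alludes to.
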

To give an intuition for \cref{lemma:long}, observe that from \cref{lemma:hsrd} we can sample a hitting set of size $\tilde O(n / h)$ for all shortest paths with a hop of at least $h$. We can compute the shortest paths to and from vertices in this set by running Dijkstra's algorithm from each vertex in this set in $G$ as well as $\overleftarrow{G}$, and then combine the answers trivially. It is easy to see that the computed distance matrix satisfies the stated conditions. 

Finally, we will also use an online version of the Chernoff bound:
\begin{lemma} [Theorem 2.2 of \cite{Chekuri2018RandomizedMF}] \label{lemma:onlinechernoff}
    Let $X_1, \cdots X_n, Y_1, \cdots Y_n \in [0, 1]$ be random variables and let $\eps \in [0, 1 / 2)$ be a sufficiently small constant. If $\mathbf{E}[X_i \mid X_1, \cdots X_{i - 1}, Y_1 \cdots Y_i] \ge Y_i$ for each $i$, then for any $\delta > 0$, $$\mathrm{P}\left(\sum\limits_{i = 1} ^ {n}{X_i} \le (1 - \eps)\sum\limits_{i = 1} ^ {n}{Y_i} - \delta\right) \le (1 - \eps) ^ {\delta}.$$
\end{lemma}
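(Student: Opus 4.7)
The plan is a standard exponential-moment (Chernoff/supermartingale) argument adapted to the adaptive, online setting where the conditional lower bounds $Y_i$ on $X_i$ are themselves random.

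First I would set up the filtration $\mathcal{H}_i := \sigma(X_1, \ldots, X_{i-1}, Y_1, \ldots, Y_i)$, so that $Y_i$ is $\mathcal{H}_i$-measurable and the hypothesis reads $\Ex[X_i \mid \mathcal{H}_i] \ge Y_i$, and let $Z := \sum_{i=1}^n [(1-\eps) Y_i - X_i]$. The claim is equivalent to $\Pr[Z \ge \delta] \le (1-\eps)^\delta$.

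The heart of the proof is the choice $t := -\ln(1-\eps) > 0$, which satisfies $1 - e^{-t} = \eps$. By Markov's inequality, $\Pr[Z \ge \delta] \le e^{-t\delta} \Ex[e^{tZ}] = (1-\eps)^\delta \Ex[e^{tZ}]$, so it suffices to prove $\Ex[e^{tZ}] \le 1$. For this, I would use the tangent-line convexity bound $e^{-tX_i} \le 1 - X_i(1 - e^{-t}) = 1 - \eps X_i$, valid for $X_i \in [0, 1]$. Pulling out the $\mathcal{H}_i$-measurable factor $e^{t(1-\eps) Y_i}$ gives
\[
\Ex\!\left[e^{t[(1-\eps) Y_i - X_i]} \mid \mathcal{H}_i\right] \le e^{t(1-\eps) Y_i}\bigl(1 - \eps\, \Ex[X_i \mid \mathcal{H}_i]\bigr) \le e^{t(1-\eps) Y_i - \eps Y_i} = e^{Y_i [t(1-\eps) - \eps]}.
\]
The key calculus fact is that the function $f(\eps) := -(1-\eps)\ln(1-\eps) - \eps$ satisfies $f(0) = 0$ and $f'(\eps) = \ln(1-\eps) \le 0$ on $[0, 1)$, so $f \le 0$ throughout (the ``sufficiently small $\eps$'' hypothesis is not strictly needed for this MGF step). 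Therefore the conditional factor above is at most $1$, and iterating via the tower property of conditional expectation yields $\Ex[e^{tZ}] \le 1$, as required.

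The main subtlety, more a bookkeeping point than a real obstacle, is ensuring that the conditioning faithfully captures the adaptive structure: $Y_i$ may itself depend on $X_1, \ldots, X_{i-1}$ and on earlier $Y$'s, so at each step one must condition on $\mathcal{H}_i$ (which contains $Y_i$) before applying the one-step MGF bound. Once this is set up, the rest is a direct adaptation of the classical multiplicative Chernoff derivation: the product $M_k := \prod_{i=1}^k e^{t[(1-\eps) Y_i - X_i]}$ is a supermartingale with $\Ex[M_n] \le \Ex[M_0] = 1$, completing the proof.
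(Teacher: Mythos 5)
Your proof is correct. The filtration $\mathcal{H}_i=\sigma(X_1,\ldots,X_{i-1},Y_1,\ldots,Y_i)$ is set up properly so that $Y_i$ is $\mathcal{H}_i$-measurable and the hypothesis $\Ex[X_i\mid\mathcal{H}_i]\ge Y_i$ applies; the choice $t=-\ln(1-\eps)$ makes $e^{-t\delta}=(1-\eps)^\delta$; the convexity (secant-line) bound $e^{-tX_i}\le 1-\eps X_i$ for $X_i\in[0,1]$ combined with $1-x\le e^{-x}$ gives the one-step factor $e^{Y_i[t(1-\eps)-\eps]}$; and the elementary fact $-(1-\eps)\ln(1-\eps)-\eps\le 0$ for $\eps\in[0,1)$ (with $Y_i\ge 0$) makes that factor at most $1$, so iterating the tower property yields $\Ex[e^{tZ}]\le 1$ and the claim follows by Markov. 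You are also right that nothing in this derivation actually requires $\eps<1/2$, only $\eps\in[0,1)$. One small caveat on comparison: the paper does not give its own proof of this lemma --- it cites it verbatim as Theorem~2.2 of the Chekuri--Quanrud reference --- so there is no in-paper argument to check yours against, but what you have written is the standard exponential-moment / supermartingale derivation one would expect the cited source to use, and it is sound.
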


\subsection{Path Operations} \label{sec:po}
For now, we assume the following operations on paths are possible:
\begin{itemize}
    \item Given a path $p$, we can retrieve the list of vertices $\{p[0], p[1], p[2], \cdots, p[\abs{p}]\}$ in $O(\abs{p})$ time.
    \item Given two paths $u$ and $v$, we can concatenate $u$ and $v$ in $O(1)$ time.
    \item Both $w(p)$ and $\abs{p}$ can be found in $O(1)$ time.
\end{itemize}

After describing our algorithm, we shall discuss how all these path operations are possible within $\tilde O(n ^ 2)$ space in \cref{sec:space}.

\section{A Framework Based on Gutenberg and Wulff-Nilsen's Approach}
Gutenberg and Wulff-Nilsen's approach was based on an earlier fundamental framework that reduces the fully dynamic problem into a decremental one, which states the following reduction:
\begin{lemma} [See \cite{henzinger01,thorup05,Abraham2017FullyDA, previous}] \label{lemma:framework}
    Given a data structure on $G$ that supports a batch deletion of up to $\Delta$ vertices $D \subset \ver$ from $G$ such that the data structure computes for each $(s, t) \in (\ver \backslash D) ^ 2$ the shortest path ${\pi}_{s, t}$ in $G \backslash D$ with the preprocessing time begin $t_{\textrm{pre}}$, the batch deletion worst-case time being $t_{\textrm{del}}$, and the space usage being $M$, there exists a fully dynamic APSP algorithm with $O(t_{\textrm{pre}} / \Delta + t_{\textrm{del}} + \Delta n ^ 2)$ worst-case update time and $O(M)$ space.
\end{lemma}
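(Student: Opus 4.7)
The plan is to implement the classical snapshot-and-rebuild reduction, augmented with Johnson-style incremental vertex insertions, and combined with a two-copy scheduling trick to convert amortized preprocessing cost into worst-case cost. At a high level I would partition time into \emph{epochs} of $\Delta$ consecutive updates. At the start of each epoch I take a snapshot of the current graph and invoke the $t_{\textrm{pre}}$-time preprocessing on it. To avoid paying this cost on a single update, I maintain two shadow instances in parallel: at any moment one instance is \emph{active} (answering queries against an older snapshot) while the other is being rebuilt against the most recent snapshot, with the $t_{\textrm{pre}}$ work spread evenly across the $\Delta$ updates of an epoch. Because the active instance's snapshot $G_0$ may be up to $2\Delta$ updates old by the time the next swap completes, the batch-deletion routine is only ever invoked with a set of size at most $2\Delta$, exactly matching the hypothesis.

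Relative to the active snapshot $G_0$, I would maintain two explicit sets $D, I \subset V$ of total size at most $2\Delta$: $D$ contains the original vertices of $G_0$ that have since been deleted and $I$ contains the vertices inserted since the snapshot. An insertion of $v$ appends $v$ to $I$; a deletion of $v$ either removes $v$ from $I$ (if it was inserted during the current epoch) or appends $v$ to $D$ (otherwise). After any update the current graph equals $(G_0 \setminus D)$ augmented by $I$ with the corresponding incident edges currently present.

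To refresh the distance matrix after an update, I would first invoke the batch-deletion routine on $(G_0, D)$, obtaining a collection of shortest paths $\{\pi^{(0)}_{s,t}\}$ in $G_0 \setminus D$ in time $t_{\textrm{del}}$. Then I add the vertices of $I$ one at a time using Johnson's insertion: each new vertex $v$ updates $d(s,t) \leftarrow \min(d(s,t), d(s,v) + d(v,t))$ for all $(s,t)$, recording $v$ as the ``pivot'' of $(s,t)$ whenever the minimum is strictly improved. Each insertion costs $O(n^2)$ and there are at most $2\Delta$ of them, totaling $O(\Delta n^2)$. For the $O(k)$ path-reporting requirement I would recurse on the pivot structure: if the pivot of $(s,t)$ is $v$, the first $k$ edges of $\pi_{s,t}$ come from $\pi_{s,v}$, continuing into $\pi_{v,t}$ once the first subpath is exhausted; a null pivot triggers the base case of retrieving directly from the batch-deletion data structure, which is $O(k)$ by hypothesis. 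Each recursive level either consumes an edge or reaches a base case, so the total retrieval time remains $O(k)$.

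Summing the three contributions yields the claimed $O(t_{\textrm{pre}}/\Delta + t_{\textrm{del}} + \Delta n^2)$ worst-case bound per update. The main obstacle is the two-copy bookkeeping: I must carefully argue that at every instant exactly one shadow instance is in a fully-built state whose snapshot is at most $2\Delta$ updates stale, that the sets $D$ and $I$ attached to that instance have size at most $2\Delta$, and that the swap at each epoch boundary is consistent with whichever update is being processed concurrently. Everything else---Johnson's insertion, the pivot-based path decomposition, and the linear-rate spreading of the rebuild---is routine once this invariant has been set up correctly.
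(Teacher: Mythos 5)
Your proposal is correct and matches the paper's approach; the paper only gives a brief intuition (accumulate $D$ and $I$, answer via batch-delete plus Johnson insertion, rebuild every $\Delta$ updates, and ``use standard deamortization techniques''), and your two-copy scheduling argument is exactly the standard technique the paper is alluding to, spelled out in more detail than the paper bothers to give, including the reason the hypothesis needs batch size $2\Delta$ rather than $\Delta$.
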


From now on, we can suppose that there are no insertions. The framework by Gutenberg and Wulff-Nilsen in \cite{previous} is an extension of the fundamental framework. We create $L = O(\log n)$ layers indexed increasingly from bottom to top. The layer below layer $i$ is layer $i - 1$ and the layer above layer $i$ is layer $i + 1$. For each layer $i$, a subset of center vertices $C_i$ is given as a parameter. For the top layer $L$, $C_L = \ver$. For each layer $i$, we employ a data structure $S_i$ that supports a batch deletion of up to $2 ^ i$ vertices $D_i \subset V$. Each layer $i$ is preprocessed exactly per $2 ^ i$ updates, and the set $D_i$ is the set of vertices deleted since layer $i$ is last preprocessed, so $\abs{D_i} \le 2 ^ i$ and the batch deletion is always within its limitation. All layers are preprocessed initially. Therefore, every time a layer is preprocessed, all layers below it are preprocessed. When multiple layers are preprocessed, their corresponding preprocessing steps are invoked from top to bottom (i.e. in decreasing order of indices). The data structure for layer $i$ computes for each $(s, t) \in (V \backslash D_i) ^ 2$ some surviving $s$-$t$ path that is at most as long as the shortest $s$-$t$ path in $G \backslash C_{i - 1}$ through $C_i$ (Suppose $C_{-1} = \emptyset$). From a telescoping argument, if we combine the results from layer $i$ to layer $0$, we find the shortest paths through $C_i$. Since $C_L = \ver$, combining all layers gives us the globally shortest paths.

In \cite{previous}, based on standard de-amortization techniques, Gutenberg and Wulff-Nilsen implicitly showed the following lemma:
\begin{lemma} \label{lemma:frameworkdetailed}
    If for layer $i$, the preprocessing step runs in $\tilde O(T 2 ^ i)$ time, the batch deletion step runs in $\tilde O(T)$ time, and either step uses $\tilde O(M)$ space, and the number of layers $L$ is such that $2 ^ L$ is at most $\tilde O(T / n ^ 2)$, then we have a data structure for fully-dynamic APSP under vertex operations with a worst-case update time of $\tilde O(T)$ and a space usage of $\tilde O(M)$
\end{lemma}

In the previous algorithm by Gutenberg and Wulff-Nilsen from \cite{previous}, the preprocessing step for layer $i$ runs in $\tilde O(n ^ {8 / 3} \times 2 ^ i)$ time, the batch deletion step runs in $\tilde O(n ^ {8 / 3})$ time, either step uses $\tilde O(n ^ 2)$ space and the number of layer $L$ is such that $2 ^ L = \tilde O(n ^ {2 / 3})$\footnote{In their paper the data structure only contains the top $O(n ^ {1 / 3})$ layers but including the extra layers does not affect the overall running time up to a logarithmic factor.}. By applying \cref{lemma:frameworkdetailed}, this gives a running time of $\tilde O(n ^ {2 + 2 / 3})$ with a space usage of $\tilde O(n ^ 2)$. 

Our final algorithm will be such that:
\begin{itemize}
    \item For layer $i$, the preprocessing step runs in $\tilde O(n ^ {2.5} \times 2 ^ i)$ time,
    \item For every layer, the batch deletion step runs in $\tilde O(n ^ {2.5})$ time,
    \item For every layer, either step uses $\tilde O(n ^ 2)$ space, and
    \item Let $K$ be the largest power of $2$ that is not more than $n ^ {1 / 2}$, we have $2 ^ L = K$. 
\end{itemize}
By applying \cref{lemma:frameworkdetailed}, we get a running time of $\tilde O(n ^ {2.5})$ with a space usage of $\tilde O(n ^ 2)$, proving \cref{theo:main}.

To help fully introduce our data structure later, we will first introduce our basic data structure based on Gutenberg and Wulff-Nilsen's data structure with a lackluster preprocessing time of $\tilde O(n ^ {3.5})$ while achieving the goal of a batch deletion time of $\tilde O(n ^ {2.5})$ and an $L$ where $2 ^ L = \tilde O(n ^ {1 / 2})$.

Finally, we show how we can enforce non-negative edge weights using a transformation previously used in \cite{Abraham2017FullyDA}, assuming that the graph contains no negative cycles at any time.
\begin{lemma} [``Johnson transformation''\cite{Johnson1977EfficientAF}]
    Given a graph $G$ without negative cycles, we can re-weight its edges with non-negative values to obtain a graph $\tilde{G}$ such that for any $(s, t) \in \ver ^ 2$ and any two $s$-$t$ paths $(p_0, p_1) \in G$ (and therefore also in $\tilde{G}$), if $\pi_0$ is shorter than $\pi_1$ is $G$ then $\pi_0$ is also shorter than $\pi_1$ in $\tilde{G}$ and vice versa. Such $\tilde{G}$ can be computed in $O(n ^ 3)$ time.
\end{lemma}
\begin{proof}
    We use a \emph{potential function} $p: \ver \rightarrow \mathbb{R}$ such that for every edge $u, v \in \e$ with weight $w(u, v)$, the modified weight $w ^ {\prime}(u, v) = w(u, v) + p(u) - p(v) \ge 0$. For any $s$-$t$ path $p$ of length $l$, it is easy to see that its length becomes $l + p(s) - p(t)$. Therefore the order between any two $s$-$t$ paths is preserved. 
    
    Observe that if we add an additional node $q$ to the graph with an edge of weight $0$ to every node in the original graph, then a valid potential function can be obtained by setting $p(u)$ to be the shortest distance from $q$ to $u$. We can run the Bellman-Ford algorithm from $q$ in $O(nm) = O(n ^ 3)$ to obtain this potential function.
\end{proof}
We can transform the graph during the preprocessing algorithm in the top layer. This does not affect the overall preprocessing time of $\tilde O(n ^ 3)$ for the top layer. From now on, we will assume that our graph does not contain negative edge weights at any time.

\section{The Basic Algorithm}
This section will be an introduction of the framework used in Gutenberg and Wulff-Nilsen's algorithm using slightly different approaches. Most importantly, we decided to merge the update procedure (the $\textsc{Delete}$ step in \cite{previous}) from individual layers into a single update procedure that integrates the results from all the layers for convenience in the later parts of our algorithm. We make sure that the layers are such that:
\begin{itemize}
    \item Each layer $i$ is preprocessed exactly per $2 ^ i$ updates.
    \item All layers are preprocessed initially. Therefore, every time a layer is preprocessed, all layers below it are preprocessed. 
    \item When multiple layers are preprocessed, their corresponding preprocessing steps are invoked from top to bottom (i.e. in decreasing order of indices).
    \item When layer $i > 0$ is preprocessed, it is given a set $C_i$ as a parameter, and it generates a set $C_{i - 1}$ used as the corresponding parameter for layer $i - 1$ to be processed right after. 
\end{itemize}
The implementation of layer $0$ will be specified later. Unless otherwise defined, let $G$ be the graph when the sub-routine in context is invoked, and for $i \in [0, L]$, we let $G_i$ be the snapshot of the graph taken when the layer $i$ was last preprocessed, and we let the set $D_i$ be the set of vertices deleted since layer $i$ was last preprocessed (and thus $G = G_i \backslash D_i$). When a graph is not explicitly stated in the context, assume the graph is $G$.

\subsection{Basic Preprocessing Algorithm}
Recall that on layer $i$, we need to compute paths passing through a set of centers $C_i$ subject to batch deletion of up to $2 ^ i$ vertices. To do this, we pre-compute some shortest paths through $C_i$ in the preprocessing procedure so that to deal with batch deletions, we can efficiently recover affected paths by concatenating some surviving paths. To achieve this, Gutenberg and Wulff-Nilsen designed a new way to use congestion. The congestion incurred by a path on a vertex is proportional to the running time cost we later pay when this path needs to be recovered. Hence the total congestion on a vertex is the running time cost that we will pay if this vertex is deleted. We define parameters $h_j := (3 / 2) ^ j$ for $j \in [0, i_h]$, where $i_h := \ceil{\log_{3 / 2}{n ^ {0.5}}}$. We let $H := h_{i_h} = \Theta(n ^ {0.5})$. For each $j \in [0, i_h]$, we compute the $h_j$-hop shortest paths through $C_i$ and add up the congestion contributions. The idea is that if a vertex is too congested, we delegate the computation of paths passing through it to the layer below by adding it to $C_{i - 1}$. To achieve this, any vertex with congestion more than a threshold is labeled congested and removed from the graph from future path computations, but we do not replace the already computed paths that may contain this vertex. We introduce this procedure in \cref{alg:basicpreprocessing} with a much slower implementation than the one introduced in their original paper for later convenience. For the bottom-most layer $0$ that maintains paths through $C_0$, there is no preprocessing step, and batch deletion step can be done by a simple recomputation in $\tilde O(n ^ 2C_0)$ time: since we assume the edge weights are non-negative, we can simply compute the shortest paths from and to each center by running Dijkstra's algorithm in $G$ and $\overleftarrow{G}$ and combine the answers.
\begin{algorithm} [H]
    \caption{Basic Preprocessing Algorithm} \label{alg:basicpreprocessing}
    \begin{algorithmic}[1]
        \Procedure{\textsc{BasicPreprocessing}}{$i, C_i$}
            \State $O_i \gets \emptyset$ 
            \State Set threshold ${\tau}_i \gets n ^ {2.5} / 2 ^ i$ 
            \For {$j \in [0, i_h]$}
                \State Initialize $\cg_i[j, v] \gets 0$ for all $v \in \ver$
                \For {$k \in [1, \abs{\ver}]$}
                    \State $s \gets \ver[k]$
                    \State $\pi_i[s, *, j] \gets \textsc{BellmanFordThroughCenter}(s, G \backslash O_i, C_i, h_j)$ \label{line:calcrestricted}
                    \For {$t \in V, u \in \pi_i[s, t, j]$}
                        \State $\cg_i[j, u] \gets \cg_i[j, u] + n / h_j$
                    \EndFor
                    \State $O_i \gets O_i \cup \{v \in \ver \mid \cg_i[j, v] > {\tau}_i\}$
                \EndFor
            \EndFor
            \State $C_{i - 1} \gets O_i$
        \EndProcedure
    \end{algorithmic}
\end{algorithm}
Note that the set $O_i$ is the same as $C_{i - 1}$ for now, but they will differ later in our final algorithm. For now, the procedure introduced has a lackluster running time of $\tilde O(n ^ 3H)$. In \cite{previous} a running time of $\tilde O(\abs{C_{i}}n ^ 2H)$ was achieved using randomization. We now introduce the lemma for the correctness of this preprocessing procedure.
\begin{lemma} \label{lemma:invariants}
    \Cref{alg:basicpreprocessing} preserves the following invariants:
    \begin{enumerate}
        \item $\forall (j, v) \in [0, i_h] \times \ver: \cg_i[j, v] \le 2{\tau}_i$,
        \item $\sum_{j \in [0, i_h], v \in \ver}{\cg_i[j, v]} = O(n ^ 3\log n)$,
        \item $\abs{O_i} = O(n ^ 3 \log n / {\tau}_i) = \tilde O(n ^ {0.5} 2 ^ i)$,
        \item $\pi_i[s, t, j]$ has a hop of at most $h_j$ and is not longer than the $h_j$-hop-shortest path between $s$ and $t$ in $G \backslash O_i$.
    \end{enumerate}
\end{lemma}
\begin{proof}
    To see (1), observe that a single iteration of the while loop does not contribute more than $n ^ 2 < {\tau}_i$ to the congestion on a single vertex. Since a vertex is removed from the path when its congestion exceeds $\tau_i$, the final congestion does not exceed $\tau_i + \tau_i = 2\tau_i$. To see (2), note that a path contributes exactly $n$ to the total congestion and there are $O(n ^ 2 \log n)$ paths. Therefore the total congestion does not exceed $n \times O(n ^ 2 \log n) = O(n ^ 3 \log n)$. To see (3), note that a vertex is added to $O_i$ only if its congestion value exceeds ${\tau}_i$. Since from (2) the total congestion is $O(n ^ 3 \log n)$, the claim follows. (4) is due to the fact that the graph $G \backslash O_i$ is decremental (i.e. we can only exclude more vertices from $G \backslash O_i$ the further we are into the procedure). 
\end{proof}

We define the total congestion on a vertex $v$ on layer $i$ to be $\cg_i(v)$. By definition, $\cg_i(v) = \sum_{0 \le j \le i_h}{\cg_i[j, v]}$. Since $i_h = O(\log n)$, we have $\cg_i(v) = \tilde O(\tau_i)$. 

\subsection{Basic Update Algorithm}
\subsubsection{Algorithm Description}
Although in the original framework from \cite{previous}, there is a separate batch deletion procedure for each layer. Our data structure uses a global update procedure that handles batch deletions on all the layers by polling the information on each layer. For the update procedure, at the start of the procedure, for each tuple $(s, v, t, j) \in \ver ^ 3 \times [0, i_h]$ we sample a random 0-1 variable $X[u, v, w, j]$ which equals $1$ with probability $\frac{100\log_2{n}}{h_j} = \tilde O(1 / h_j)$. The variables are fixed throughout the update. We will show in \cref{sec:space} how to sample these random variables implicitly space-efficiently. To do the update, we first check the effects of batch deletion on each layer. For layer $i$, if the removal of $D_i$ invalidates a computed path, we mark the path as a path that ``needs recovery.'' We look at the remaining paths on each layer. We let $\pi ^ {\prime}[s, t, j]$ be the shortest $\pi_i[s, t, j]$ that remains among all layers $i \in [0, L]$. We use these paths to recover the missing paths by using the random variables above to determine a hitting set\footnote{The algorithm in \cite{previous} did not rely on randomness in this part. It employed a deterministic hitting set construction.}. The update procedure is described in \cref{alg:basicupdate}.
\begin{algorithm} [H]
    \caption{Basic Update Procedure} \label{alg:basicupdate}
    \begin{algorithmic}[1]
        \Function{\textsc{Recover}}{$\pi_o, s, t, j$} \Comment{Recover ${\pi_o}[s, t, j]$}
            \State $\eta \gets \{v \mid X[s, v, t, j] = 1\} \cup \{s, t\}$ \Comment{Construct a hitting set.} \label{line:hsbasic}
            \State $x ^ {\prime} \gets \textrm{argmin}_{x \in \eta}{w({\pi_o}[s, x, j - 1] \circ {\pi_o}[x, t, j - 1])}$
            \State \Return ${\pi_o}[s, x ^ {\prime}, j - 1] \circ {\pi_o}[x ^ {\prime}, t, j - 1]$ \label{line:recover}
        \EndFunction
        \Function{\textsc{BasicUpdate}}{} \Comment{Compute the distance matrix after the update}
            \State $\pi ^ {\prime} \gets \perp$
            \State $\mathrm{NeedsRecovery}[*, *, *] \gets \mathtt{FALSE}$ \label{line:taggingrecovery}
            \For {$i \in [0, L]$} \label{line:forloopi}
                \For {$j \in [0, i_h]$}
                    \For {$s, t \in \ver$}
                        \If {$\pi_i[s, t, j] \cap D_i = \emptyset$} \label{line:checkintersect}
                            \If {$w(\pi_i[s, t, j]) < w({\pi ^ {\prime}}[s, t, j])$}
                                \State ${\pi ^ {\prime}}[s, t, j] \gets \pi_i[s, t, j]$ \label{line:getpiprime}
                            \EndIf
                        \Else
                            \State $\mathrm{NeedsRecovery}[s, t, j] = \mathtt{TRUE}$
                        \EndIf
                    \EndFor
                \EndFor
            \EndFor
            \State $\pi_o \gets \textrm{a copy of }\pi ^ {\prime}$
            \For {$j \in [0, i_h]$}
                \For {$s, t \in \ver$}
                    \If {$\mathrm{NeedsRecovery}[s, t, j] = \mathtt{TRUE}$}
                        \State ${\pi_o}[s, t, j] \gets \textsc{Recover}(\pi_o, s, t, j)$ \label{line:recoverbasic}
                    \EndIf
                \EndFor
            \EndFor
            \State $A \gets \textsc{RandGetShortestPaths}(G, H)$ \Comment{From \cref{lemma:long}} \label{line:geta}
            \For {$s, t \in \ver$}
                \State $A_{s, t} \gets \min(A_{s, t}, w({\pi_o}[s, t, i_h]))$ \label{line:combine}
            \EndFor
            \State \Return $A$
        \EndFunction
    \end{algorithmic}
\end{algorithm}

\subsubsection{Running Time Analysis}
For a triple $(s, t, j) \in \ver ^ 2 \times [0, i_h]$, define the \emph{recovery cost} $c(s, t, j) := n / h_j$. We first have the following lemma:
\begin{lemma} \label{lemma:coreprelude}
    The total time spent running line \ref{line:recoverbasic} is no more than the total recovery cost of triples $(s, t, j)$ such that $\mathrm{NeedsRecovery}[s, t, j] = \mathtt{TRUE}$ (up to a logarithmic factor) with high probability.
\end{lemma}
\begin{proof}
    It is easy to see that the $\textsc{Recover}(\pi_o, s, t, j)$ takes $\sum_{v}{X[s, v, t, j]} = O(\abs{\eta})$ time. Note that $X[u, v, w, j]$ equals $1$ with probability $\frac{100\log_2{n}}{h_j}$ and all random variables are independent. We have $O(\abs{\eta}) = \sum_{v}{X[s, v, t, j]} \le 10000\left(\log_2{n}\right) ^ 2c(s, t, j) = \tilde O(c(s, t, j))$ with high probability from the Chernoff bound\footnote{The multiplicative form of Chernoff bound is as follows: Let $x_1, x_2, \cdots, x_n$ be $n$ independent random variables taking values from $\{0, 1\}$ and let $x = \sum_i{x_i}$, and $\mu = E(x)$. Then, for any $\epsilon > 0$, we have $\mathbb{P}(x \ge (1 + \epsilon)\mu) \le e ^ {-\frac{\epsilon ^ 2\mu}{2 + \epsilon}})$ and $\mathbb{P}(x \le (1 - \epsilon)\mu) \le e ^ {-\frac{\epsilon ^ 2\mu}{2}})$.}.  Therefore, with high probability, the total time spent running line \ref{line:recoverbasic} is proportional to the total recovery cost of triples $(s, t, j)$ such that $\mathrm{NeedsRecovery}[s, t, j] = \mathtt{TRUE}$. 
\end{proof}

We now show the core argument used in Gutenberg and Wulff-Nilsen's framework with the following lemma:
\begin{lemma} \label{lemma:core}
    With high probability, the total time spent running line \ref{line:recoverbasic} is $\tilde O\left(\sum_{i}\sum_{v \in D_i}{\cg_i(v)}\right)$.
\end{lemma}
\begin{proof}
    The total recovery cost of triples $(s, t, j)$ such that $\mathrm{NeedsRecovery}[s, t, j] = \mathtt{TRUE}$ is no more than the total sum over $i \in [0, L]$ of the recovery cost of triples $(s, t, j)$ such that $\pi_i[s, t, j] \cap D_i \ne \emptyset$. From the preprocessing algorithm, we can see that $\cg_i(v)$ is equal to the total recovery cost of paths through $v$. Therefore the total recovery cost of triples $(s, t, j)$ such that $\mathrm{NeedsRecovery}[s, t, j] = \mathtt{TRUE}$ is no more than $\sum_{i}\sum_{v \in D_i}{\cg_i(v)}$. From \cref{lemma:coreprelude}, the total time spent running line \ref{line:recoverbasic} is $\tilde O\left(\sum_{i}\sum_{v \in D_i}{\cg_i(v)}\right)$ with high probability.
\end{proof}

\begin{lemma} \label{lemma:basicupdaterunningtime}
    \cref{alg:basicupdate} runs in $\tilde O(n ^ {2.5})$ time.
\end{lemma}
\begin{proof}
    From \cref{lemma:core}, the running time of the update procedure is 
    \begin{align*}
        \tilde O\left(\sum_{i}\sum_{v \in D_i}{\cg_i(v)}\right)     &= \tilde O\left(\sum_{i}\abs{D_i} \times 4\tau_i\right) \\
                                                                    &= \tilde O\left(\sum_{i}\abs{D_i}\tau_i\right) \\
                                                                    &= \tilde O\left (\sum_{i}2 ^ i \times n ^ {2.5} / 2 ^ i \right) \\
                                                                    &= \tilde O\left (\sum_{i}n ^ {2.5} \right) \\ 
                                                                    &= \tilde O(n ^ {2.5}).
    \end{align*}
    Since $H = n ^ {0.5}$, from \cref{lemma:long} the call to $\textsc{RandGetShortestPaths}$ runs in $\tilde O(n ^ {2.5})$ time. The line \ref{line:checkintersect} can be done by simply retrieving all the vertices on the paths and checking whether they are in $D_i$. Note that hops of paths do not exceed $O(H)$, and therefore this takes $\tilde O(n ^ 2H) = \tilde O(n ^ {2.5})$ time. The rest of the procedure can clearly be implemented in $\tilde O(n ^ {2.5})$ time. 
\end{proof}

\subsubsection{Correctness}
The next three lemmata show the correctness of \cref{alg:basicupdate}.
\begin{lemma} \label{lemma:inductionbase}
    If $\mathrm{NeedsRecovery}[s, t, j] = \mathtt{FALSE}$, then $\pi ^ {\prime}[s, t, j]$ is the $h_j$-hop shortest $s$-$t$ path in $G$. Moreover, if we constrain the for loop on \ref{line:forloopi} to $i \in [i ^ {\prime}, L]$ for some $i ^ {\prime} \in [0, L]$, then $\pi ^ {\prime}[s, t, j]$ is not longer than the $h_j$-hop shortest $s$-$t$ path through $C_{i ^ {\prime}}$ in $G$.
\end{lemma}
\begin{proof} 
    From invariant (4) in \cref{lemma:invariants}, for every $i \in [0, L]$, $\pi_i[s, t, j]$ is not longer than the $h_j$ shortest $s$-$t$ path through $C_i$ in $G_i \backslash O_i$. Since $G_i \backslash O_i \supset G_i \backslash C_{i - 1} \supset G \backslash C_{i - 1}$, $\pi_i[s, t, j]$ is not longer than the $h_j$ shortest $s$-$t$ path through $C_i$ in $G \backslash C_{i - 1}$. Since $\mathrm{NeedsRecovery}[s, t, j] = \mathtt{FALSE}$, every $\pi_i[s, t, j]$ is present. Therefore from a telescoping argument, $\pi ^ {\prime}[s, t, j]$ is not longer than the $h_j$-hop shortest $s$-$t$ path. Since from the preprocessing algorithm, $\pi ^ {\prime}[s, t, j]$ has a hop at most $h_j$, $\pi ^ {\prime}[s, t, j]$ is the $h_j$-hop shortest $s$-$t$ path.

    The proof of the second part of the lemma is similar.
\end{proof}

\begin{lemma} \label{lemma:inductionstep}
    Given $j \in [0, i_h]$, suppose for all pairs $(u, v) \in \ver ^ 2$, $\pi_o[u, v, j - 1]$ is not longer than the $h_{j - 1}$-hop-shortest $u$-$v$ path. Then for a fixed $(s, t) \in \ver ^ 2$, $\pi_o[s, t, j]$ is not longer than the $h_j$-hop shortest path between $s$ and $t$ with probability at least $1 - 1 / n ^ {10}$.
\end{lemma}
\begin{proof}
    Let $p$ be the $h_j$-hop shortest path between $s$ and $t$. We need to show that $w(\pi_o[s, t, j])\le w(p)$ with probability at least $1 - 1 / n ^ {10}$.
    
    If $\mathrm{NeedsRecovery}[s, t, j] = \mathtt{FALSE}$, then $w(\pi_o[s, t, j]) = w(\pi ^ {\prime}[s, t, j]) \le w(p)$ from \cref{lemma:inductionbase}. 
    
    Suppose $\mathrm{NeedsRecovery}[s, t, j] = \mathtt{TRUE}$. Let the $h_j$-hop-shortest $s$-$t$ path be $p$. If $p$ has a hop more than $h_{j - 1}$, then for any vertex $u$ on $p[\abs{p} - h_{j - 1}, h_{j - 1}]$, both $\abs{p[<u]}$ and $\abs{p[>u]}$ are no more than $h_{j - 1}$. Therefore by the assumption in the lemma, $w({\pi_o}[s, u, i - 1]) \le w(p[<u])$ and $w({\pi_o}[u, t, i - 1]) \le w(p[>u])$. which means that there are at least $h_{j - 1} - (h_j - h_{j - 1}) \ge 1 / 3h_j$ candidates $u$ such that $w({\pi_o}[s, u, i - 1] \circ {\pi_o}[u, t, i - 1]) \le w(p)$. From the Chernoff bound, with probability at least $1 - 1 / n ^ {10}$, at least one such candidate $u$ is such that $X[s, u, t, j] = 1$, which will enforce that $w(\pi_o[s, t, j]) \le w(p)$. For the case when $p$ has a hop of at most $h_{j - 1}$, since both $s$ and $t$ are included in the set $\eta$ on line \ref{line:hsbasic}, and from the assumption in the lemma $w(\pi_o[s, t, j - 1]) \le w(p)$, setting $x$ to be either $s$ or $t$ on line \ref{line:recover} enforces that $w(\pi_o[s, t, j]) \le w(p)$
\end{proof}

\begin{lemma} \label{lemma:induction}
    At the end of \cref{alg:basicupdate}, with a probability of at least $1 - 1 / n ^ {7.99}$, for all $(s, t, j) \in \ver ^ 2 \times [0, i_h]$, $\pi_o[s, t, j]$ is the $h_j$-hop-shortest path.
\end{lemma}
\begin{proof}
    We do induction on $j$ in increasing order and apply \cref{lemma:inductionstep}. By union bound over all $\tilde O(n ^ 2)$ recoveries throughout the update, our procedure is correct with probability at least $1 - 1 / n ^ {7.99}$.
\end{proof}

\begin{lemma} \label{lemma:basiccorrectness}
    At the end of \cref{alg:basicupdate}, with high probability, for every $(s, t) \in \ver ^ 2$, $A_{s, t}$ is the length of the shortest $s$-$t$ path in $G$. 
\end{lemma}
\begin{proof}
     By \cref{lemma:induction}, with high probability, for every $(s, t) \in \ver ^ 2$, $\pi_o[s, t, i_h]$ is not longer than the $H$-hop shortest $s$-$t$ path. It is easy to see from our procedure that $\pi_o[s, t, i_h]$ is a valid $s$-$t$ path, and therefore is not shorter than the shortest $s$-$t$ path. From \cref{lemma:long}, on line \ref{line:geta}, $A_{s, t}$ is not shorter than the shortest $s$-$t$ path, and if the shortest $s$-$t$ path has a hop more than $H$, $A_{s, t}$ is the length of the shortest $s$-$t$ path with high probability. Due to line \ref{line:combine}, at the end of the algorithm, with high probability, for every $(s, t) \in \ver ^ 2$, $A_{s, t}$ is the length of the shortest $s$-$t$ path.
\end{proof}

\section{An Efficient Data Structure With An Oracle} \label{sec:step1}
We will now introduce a data structure that achieves our desired running time with the help of an oracle that we will later dispose of. An $h$-hop shortest $s$-$t$ path is called \emph{$h$-hop-dominant} if it's also $2h$-hop shortest. A path $p$ is \emph{strongly-hop-dominant} if it's $4\abs{p}$-hop shortest. We will later prove the following lemma in \cref{sec:ssahdp} which states that strongly-hop-dominant paths can be computed efficiently:
\begin{restatable}{lemma}{computedominant} \label{lemma:computedominant}
   For a graph $G$ of $n$ vertices and $m$ edges and a starting vertex $s$ and a hop limit $H$, there is an algorithm $\textsc{SSAHDP}(G, H, s)$ that can, in $\tilde O(n + m)$ time and space, compute a set of paths from $s$ in $G$ that contains all strongly-hop-dominant paths starting from vertex $s$ with a hop at most $H$.
\end{restatable}

\subsection{Why Can We Not Just Replace Hop-Restricted Paths with Hop-Dominant Paths?} \label{sec:fail}
Suppose that we can somehow efficiently compute hop-dominant paths through a set of centers. It seems like we could simply replace line \ref{line:calcrestricted} from \cref{alg:basicpreprocessing} with hop-dominant paths. However, this turns out to be problematic. Note that the efficiency of batch deletion procedure is contingent on the following fact:
\begin{fact} \label{fact:fail}
    If the final shortest $s$-$t$ path $p$ has a hop in $(h_{j - 1}, h_j]$, either $\Pi[s, t] = p$, or for some $i$ and some $j ^ {\prime}$ that is ``relatively close'' to $j$ (in the basic data structure we have $j ^ {\prime} = j$), $\pi_i[s, t, j ^ {\prime}] \intersect D_i \ne \emptyset$.
\end{fact}

For hop-dominant paths, we will illustrate a simple counter-example to this. For simplicity, suppose that we only have the top layer $L$ (it is possible, but somewhat tedious, to extend this to multi-layers). Suppose that the shortest $s$-$t$ path $p_{10000}$ has a hop of $10000$. Suppose that for every hop $h \in [201, 9999]$, the shortest $h$-hop-restricted $s$-$t$ path $p_h$ is strictly longer than the shortest $(h + 1)$-hop-restricted $s$-$t$ path $p_{h + 1}$. Then there is no hop-dominant path with a hop in $(100, 10000)$. Suppose that the shortest $100$-hop-restricted $s$-$t$ path $p_{100}$ is a hop-dominant path. Then for every $j$ such that $100 \le h_{j - 1} < h_j < 10000$, the shortest $h_j$-hop-dominant path is $p_{100}$. However, we can have a scenario where:
\begin{itemize}
    \item $p_{100}$ survives the batch deletion,
    \item $p_{1000}$ survives the batch deletion, and
    \item For every $1000 < h \le 10000$, $p_h$ does not survive the batch deletion.
\end{itemize}
Now we can see that the final shortest $s$-$t$ path is $p_{1000}$, but $\Pi[s, t] = p_{10000}$. For $j$ such that $h_{j - 1} < 1000 \le h_j$, we have $\pi_L[s, t, j] = p_{100}$, which survives the batch deletion. Therefore, \cref{fact:fail} does not apply to hop-dominant paths.

In our algorithm, we will bypass this obstacle in two steps: In this section, we first efficiently ``approximate'' the congested vertex sets $C_i$ using hop-restricted paths. Then we show that if we are given the hop and length of each $s$-$t$ path, we can actually ``witness'' them by concatenating hop-dominant paths. Such proof is made possible by comparing our data structure with a hypothetical data structure that obtains all the hop-restricted paths. In \cref{sec:step2} we show how to get rid of the oracle. 

Finally, we leave it as an open problem whether a more straightforward approach is possible.
\subsection{Preprocessing Algorithm}
To speed up the preprocessing steps, we will employ two ideas. The first idea is to use random sampling to compute the set $O_i$. Instead of going through all the vertices and computing their contribution to congestion as we did in \cref{alg:basicpreprocessing}, we only do so for some vertices sampled randomly, and we try to simulate the original algorithm in an approximate sense by magnifying the congestion contributions. The second idea is that instead of computing all hop-restricted paths, we use \cref{lemma:computedominant} to compute only those that are strongly hop-dominant. With the oracle, we will show that these paths are sufficient for an efficient update procedure.

\subsubsection{Random Sampling} \label{sec:randomization}
The sampling step is shown in \cref{alg:sample}. The difference between \cref{alg:sample} and \cref{alg:basicpreprocessing} is that in \cref{alg:sample}, we skip the majority of vertices but we magnify the congestion contributions from paths to vertices by a large factor which is the inverse of the probability of keeping a vertex.
\begin{algorithm} [H]
    \caption{Sampling congested vertex set $O_i$ for paths through centers $C_i$}   \label{alg:sample}
    \begin{algorithmic}[1]
        \Function{\textsc{Sample}}{$i, C_i$}
            \State $O_i \gets \emptyset$
            \State Set threshold ${\tau}_i \gets n ^ {2.5} / 2 ^ i$ 
            \For {$j \in [0, i_h]$}
                \State Initialize $\cg_i[j, v] \gets 0$ for all $v \in \ver$
                \State $p_j \gets \frac{n ^ 2 \ln {n}}{\tau_i h_j}$ \Comment{Probability of keeping a vertex}
                \State Uniformly sample indices in $[1, \abs{\ver}]$ with probability $p_j$. \label{line:sampleindices}
                \For {$k \in [1, \abs{\ver}]$} \label{line:forloopk}
                    \State $s \gets \ver[k]$
                    \If {Index $k$ is sampled}
                        \State $P[*] \gets \textsc{BellmanFordThroughCenter}(s, G \backslash O_i, C_i, h_j)$ \label{line:getpaths}
                        \For {$t \in V, u \in P[t]$} \label{line:forloopcg}
                            \State $\cg_i[j, u] \gets \cg_i[j, u] + (n / h_j) / p_j$ \label{line:increasecg}
                        \EndFor
                        \State $O_i \gets O_i \cup \{v \in \ver \mid \cg_i[j, v] > {\tau}_i\}$ \label{line:addtoc}
                    \EndIf
                \EndFor
            \EndFor
            \State \Return $O_i$
        \EndFunction
    \end{algorithmic}
\end{algorithm}

We now show that \cref{alg:sample} indeed approximates a good set of congested vertices and runs efficiently. We analyze the running time, bound the size of $C_i$, and argue that our set of congestion vertices bounds the congestion values in an approximate sense for a hidden set of hop-restricted paths. 
\begin{lemma}
    \cref{alg:sample} runs in $\tilde O(\frac{n ^ 5}{\tau_i}) = \tilde O(n ^ {2.5}2 ^ i)$ time with high probability.
\end{lemma}
\begin{proof}
    For each $j$, observe that the number of indices sampled on line \ref{line:sampleindices} is $I_j = O(p_jn)$ with high probability due to the Chernoff Bound. Thus the running time of \cref{alg:sample} is $O(\sum_{j}{I_j \times n ^ 2h_j}) = O(\sum_{j}{p_jn ^ 3h_j}) = \tilde O(\frac{n ^ 5}{\tau_i})$ with high probability.
\end{proof}

Our congested vertex set satisfies the following:
\begin{lemma} \label{lemma:csize}
    At the end of \cref{alg:sample}, we have $\abs{O_i} = O(\frac{n ^ 3}{\tau_i})$ with high probability.
\end{lemma}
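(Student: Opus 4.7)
The plan is to analyze each level $i \in [0, i_h]$ independently, bound the number of vertices newly placed into $C_l$ during that level, and then sum over the $O(\log n)$ levels. Note that $C_l$ only grows monotonically, so it suffices to control the incremental additions.

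First, I would bound the number $I$ of sampled sources at level $i$. Since each of the $n$ indices in $X$ is chosen independently with probability $p_i = \frac{n^2 \log n}{\tau_l h_i}$, we have $\mathbf{E}[I] = p_i n$. A standard multiplicative Chernoff bound yields $I = O(p_i n)$ with high probability; taking a union bound over the $O(\log n)$ levels keeps this uniform.

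Second, I would bound the total congestion accumulated during level $i$. The array $\mathrm{cg}(\cdot)$ is reset to $0$ at the start of each level, so it suffices to analyze one level in isolation. For a single sampled source $s$, the routine $\textsc{BellmanFordThroughCenters}$ returns at most $n$ paths (one per target $t$), each containing at most $h_i$ vertices, and each such vertex receives an increment of $n/h_i$. Hence a single source contributes at most $n \cdot h_i \cdot (n/h_i) = n^2$ to $\sum_v \mathrm{cg}(v)$, so at the end of level $i$,
\[
\sum_{v \in \ver} \mathrm{cg}(v) \;\le\; I \cdot n^2 \;=\; O(p_i n^3)
\]
with high probability.

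Third, a vertex is added to $C_l$ at level $i$ only once its congestion crosses the threshold $p_i \tau_l$ on Line~\ref{line:addtoc}. A straightforward averaging bound then gives that the number of newly added vertices at level $i$ is at most
\[
\frac{\sum_{v} \mathrm{cg}(v)}{p_i \tau_l} \;\le\; \frac{O(p_i n^3)}{p_i \tau_l} \;=\; O\!\left(\frac{n^3}{\tau_l}\right).
\]
Summing across the $O(\log n)$ levels and taking a union bound over the Chernoff events gives $\abs{C_l} = O(n^3/\tau_l)$ (up to a logarithmic factor absorbed in the statement) with high probability. The only nontrivial step is the concentration on $I$; the rest is elementary counting. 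Crucially, the online Chernoff bound (Lemma~\ref{lemma:onlinechernoff}) is \emph{not} needed for this lemma—that tool is reserved for the companion argument showing that $C_l$ captures the vertices that would actually be congested in the non-sampled (full) process, whereas here we only need an upper bound on $\abs{C_l}$.
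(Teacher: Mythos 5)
Your argument is correct and follows essentially the same route as the paper's: Chernoff concentration on the number $I$ of sampled sources, a deterministic per-source counting bound on $\sum_v \mathrm{cg}(v)$, and a Markov/averaging step against the threshold $p_i\tau_l$, summed over levels. One small discrepancy worth noting: the paper asserts a per-source contribution of $n^2/h_i$ (so that $\sum_i 1/h_i = O(1)$ absorbs the level sum without any logs), whereas your accounting of $n$ targets $\times$ $h_i$ vertices $\times$ increment $n/h_i$ gives $n^2$ per source, which appears to be the correct count; your version therefore picks up an extra $O(\log n)$ from summing over the $O(\log n)$ levels, yielding $\tilde O(n^3/\tau_l)$ rather than the bare $O(n^3/\tau_l)$, a difference you correctly flag as harmless for the paper's purposes.
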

\begin{proof}
    For each level $j \in [0, i_h]$, since the number of indices sampled on line \ref{line:sampleindices} is $I_j = O(p_jn)$ and each source vertex contributes a congestion of $(n / h_j) / p_j$ to at most $n$ vertices, $\sum_{u}{\cg_i[j, u]} \le I_j \times (n ^ 2 / h_j) / p_j = O(n ^ 3 / h_j)$ with high probability. Since all congested vertices have a congestion value of at least $\tau_i$, the contribution to $\abs{C_i}$ on level $j$ is $O(\frac{n ^ 3}{\tau_ih_j})$ with high probability. Since $\sum_{j \in [0, i_h]}1 / h_j = \sum_{j \in [0, i_h]}{1 / (3 / 2) ^ j} = O(1)$, the contribution across all $j$ is $\sum_{j \in [0, i_h]}{\frac{n ^ 3}{\tau_ih_j}} = O(\frac{n ^ 3}{\tau_i})$ with high probability.
\end{proof}

We argue that the sampled vertices bound the congestion incurred by a set of paths that we could not afford to obtain. We consider a hypothetical data structure run in parallel with the actual data structure with the same random seeds. In the hypothetical data structure, suppose that we execute line \ref{line:getpaths} for every index $k$ instead of only for those that are sampled. We let $\pi_i[s, *, j] = P[*]$ computed on line \ref{line:getpaths}. This modified routine is very similar to \cref{alg:basicpreprocessing} except that we calculate congestion contribution in a randomized way. It is easy to see that in the hypothetical data structure, \cref{lemma:inductionbase} and Invariant (4) from \cref{lemma:invariants} are both still true. Consider the congestion $\overline{\cg}_i[*, *]$ computed in the normal way in the hypothetical data structure: for each index $k$ from the for loop on line \ref{line:forloopk}, we run the for loop on line \ref{line:forloopcg} and add $n / h_j$ to $\overline{\cg}_i[j, u]$. We will argue that the accumulation on $\overline{\cg}_i[j, u]$ is bounded with high probability for every $(j, u) \in [0, i_h] \times \ver$.
\begin{lemma} \label{lemma:implicit}
    In the hypothetical data structure, for any $c > 0$, with probability $1 - n ^ {-c}$, for every $(j, u) \in [0, i_h] \times \ver$, $\overline{\cg}_i[j, u] \le (2c + 9)\tau_i$.
\end{lemma}
\begin{proof}
    We first show that $\overline{\cg}_i[j, u] \le (2c + 9)\tau_i$ is unconditionally true as long as $\overline{\cg}_i[j, u] \le (2c + 8)\tau_i$ when $\cg_i[j, u] \le \tau_i$. To see this, suppose $\cg_i[j, u] > \tau_i$. Consider the last execution of the for loop on line \ref{line:forloopk} (with the modification mentioned above) before $\cg_i[j, u] > \tau_i$. Before this execution $\cg_i[j, u] \le \tau_i$. If $\overline{\cg}_i[j, u] \le (2c + 8)\tau_i$, since a single source contributes at most $n ^ 2 / h_j < \tau_i$ to $\overline{\cg_i[j, u]}$ and $u$ is removed from the graph after the execution, $\overline{\cg}_i[j, u] \le (2c + 9)\tau_i)$ will remain true.
    
    Therefore, it suffices to prove that with probability $1 - n ^ {-c}$, for every $u$, whenever $\cg_i[j, u] \le \tau_i$, $\overline{\cg}_i[j, u] \le (2c + 8)\tau_i$. We prove that with probability $1 - n ^ {-c}$, the contra-positive is true: for every $u$, whenever $\overline{\cg}_i[j, u] > (2c + 8)\tau_i$, $\cg_i[j, u] > \tau_i$.
    
    Assume $\overline{\cg}_i[j, u] > (2c + 8)\tau_i$. Let $c_0 = n ^ 2 / h_j$. We let $Y_k$ be the value between $[0, 1]$ that equals $p_j / c_0$ times the contribution to $\overline{\cg}_i[j, u]$ from $s = \ver[k]$, and let $X_k$ be the random variable between $[0, 1]$ that is equal to $Y_k / p_j$ if $k$ is sampled and $0$ otherwise. We can see that $\cg_i[j, u] = c_0 / p_j\sum_k{X_k}$ and $\overline{\cg}_i[j, u] = c_0 / p_j\sum_k{Y_k}$. $\cg_i[j, u] > \tau_i$ is equivalent to $c_0 / p_j\sum_k{X_k} > \tau_i$, which is then equivalent to $\sum_k{X_k} > \ln n$. Under the assumption that $$\overline{\cg}_i[j, u] = c_0 / p_j\sum_k{Y_k} > (2c + 8)\tau_i,$$ we have $$\sum_k{Y_k} > (2c + 8)\tau_i \times p_j / c_0 = (2c + 8)\ln{n}.$$ Since the indices are sampled independently, we have $$\mathbb{E}(X_k \mid X_1 \cdots X_{k - 1}, Y_1, \cdots Y_{k}) = p_j \times (Y_k / p_j) + (1 - p_j) \times 0 = Y_k$$ . We can now apply \cref{lemma:onlinechernoff} on $X$ and $Y$ with $\eps = 1 / 2$ and $\delta = (c + 3)\ln n$, which proves that $\sum_k{X_k} > \ln n$ does not hold with probability at most $n ^ {-c - 3}$. By taking the union bound over triples $(j, u, k) \in [0, i_h] \times \ver \times [1, \abs{\ver}]$, which there are $n ^ 2i_h \le n ^ 3$ of, we prove the claim.
\end{proof}

\subsubsection{New Preprocessing Algorithm}
Our new preprocessing algorithm combines the idea of computing only strongly dominant paths and the idea of applying random sampling. We first obtain the set $O_i$ using the sampling sub-routine above. Then for every $(s, t, j) \in \ver ^ 2 \times [0, i_h]$, we aim to compute a path $\Pi_i[s, t, j]$ with a hop of at most $h_j$ that is not longer than the $h_j$-hop shortest strongly-hop-dominant path between $s$ and $t$ through $C_i$ in $G \backslash O_i$. To do this, we compute all the strongly-hop-dominant paths to and from a vertex in $C_i$ in $G \backslash O_i$ using \cref{lemma:computedominant}, and combine them to cover all strongly-hop-dominant paths through $C_i$ in $G \backslash O_i$. For an positive interger $x$ we let $h ^ {-1}(x)$ be the smallest $y$ such that $x \le h_y$. For a path $p$, let the \emph{hop level} of $p$ be $h ^ {-1}(\abs{p})$. Our preprocessing algorithm is described in \cref{alg:newpreprocessing}.
\begin{algorithm} [H]
    \caption{New Preprocessing Algorithm} \label{alg:newpreprocessing}
    \begin{algorithmic}[1]
        \Procedure{\textsc{PreprocessingNew}}{$i, C_i$}
            \State $O_i \gets \textsc{Sample}(i, C_i)$
            \State $\Pi_i[*, *, *] \gets \perp$
            \For {$c \in C_i$}
                \State $P_{\textrm{from}} \gets \textsc{SSAHDP}(G \backslash O_i, H, c)$
                \State $P_{\textrm{to}} \gets \textsc{SSAHDP}(\overleftarrow{G \backslash O_i}, H, c)$
                \For {$p_{\textrm{from}} \in P_{\textrm{from}}, p_{\textrm{to}} \in P_{\textrm{to}}$}
                    \State $p \gets \overleftarrow{p_{\textrm{to}}} \circ p_{\textrm{from}}$
                    \State $j \gets h ^ {-1}(\abs{p})$
                    \If {$w(p) < w(\Pi_i[p[0], p[\abs{p}], j])$}
                        \State $\Pi_i[p[0], p[\abs{p}], j] \gets p$
                    \EndIf
                \EndFor
            \EndFor
            \State $C_{i - 1} \gets O_i$
        \EndProcedure
    \end{algorithmic}
\end{algorithm}

Due to \cref{lemma:computedominant}, \cref{alg:newpreprocessing} obviously runs in $\tilde O(n ^ 2\abs{C_i})$ time. To prove that the algorithm correctly computes $\Pi_i[*, *, *]$, we first prove the following lemma as a tool:
\begin{lemma} \label{lemma:combineshd}
    If a path $p$ is strongly-hop-dominant, for every $k \in [0, \abs{p}]$, both $p[0, k]$ and $p[k, \abs{p}]$ are strongly-hop-dominant.
\end{lemma}
\begin{proof}
    Suppose without loss of generality $p[0, k]$ is not strongly-hop-dominant. There is a path $p ^ {\prime}$ with a hop no more than $4k$ from $p[0]$ to $p[k]$ that is shorter than $p[0, k]$. By replacing $p[0, k]$ with $p ^ {\prime}$ we find a shorter path with the same starting and ending vertices as $p$ with a hop no more than $\abs{p} + 3k \le 4\abs{p}$, which contradicts the fact that $p$ is strongly-hop-dominant.
\end{proof}

Now we can show that the computed paths indeed have the desired property.
\begin{lemma} \label{lemma:correctpi}
    The path $\Pi_i[s, t, j]$ has a hop of at most $h_j$ and is not longer than the $h_j$-hop shortest strongly-hop-dominant path between $s$ and $t$ through $C_i$ in $G \backslash O_i$.
\end{lemma}
\begin{proof}  
    Straightforward from the procedure due to \cref{lemma:computedominant} and \cref{lemma:combineshd}.
\end{proof}

\subsection{Update Algorithm with An Oracle} \label{sec:oracle}
\subsubsection{Algorithm Description}
For each update, suppose we have an oracle $\textsc{HopAndLength}(s, t)$ that tells us the hop and the length of the shortest path between each pair of vertices $(s, t) \in \ver ^ 2$. We argue that we can use the oracle to concatenate paths computed in the preprocessing procedures to efficiently retrieve the shortest paths that have a hop at most $H$. To do this, for every $(s, t) \in \ver ^ 2$, we find ${\Pi ^ {\prime}}[s, t]$ which is the shortest surviving $s$-$t$ path computed from preprocessing each layer. Then for every $(s, t) \in \ver ^ 2$, if the shortest surviving $s$-$t$ path is not optimal, we recover it using the hitting set for the hop level of the shortest $s$-$t$ path, as detailed in \cref{alg:oracleupdate}.
\begin{algorithm} [H]
    \caption{Update Procedure With Oracle} \label{alg:oracleupdate}
    \begin{algorithmic}[1]
        \Function{\textsc{Recover}}{$\Pi_o, s, t, j$} \Comment{Recover ${\Pi_o}[s, t]$ using the hitting set for hop level $j$}
            \State $\eta \gets \{v \mid X[s, v, t, j] = 1\} \cup \{s, t\}$ \Comment{Construct a hitting set.}
            \State $x ^ {\prime} \gets \textrm{argmin}_{x \in \eta}{w({\Pi_o}[s, x] \circ {\Pi_o}[x, t])}$
            \State \Return ${\Pi_o}[s, x ^ {\prime}] \circ {\Pi_o}[x ^ {\prime}, t]$
        \EndFunction
        \Function{\textsc{UpdateWithOracle}}{}
            \State $\Pi ^ {\prime}[*, *] \gets \perp$ \label{line:oraclegetpiprimebegin}
            \For {$i \in [0, L]$}
                \For {$j \in [0, i_h]$}
                    \For {$s, t \in \ver$}
                        \If {$\Pi_i[s, t, j] \cap D_i = \emptyset$}
                            \If {$w(\Pi_i[s, t, j]) < w({\Pi ^ {\prime}}[s, t])$}
                                \State ${\Pi ^ {\prime}}[s, t] \gets \Pi_i[s, t, j]$
                            \EndIf
                        \EndIf
                    \EndFor
                \EndFor
            \EndFor \label{line:oraclegetpiprimeend}
            \State $\Pi_o = \textrm{a copy of }\Pi ^ {\prime}$
            \For {$j \in [0, i_h]$} \label{line:oracleforloop}
                \For {$s, t \in \ver$}
                    \State $(x, y) \gets \textsc{HopAndLength}(s, t)$
                    \If {$h ^ {-1}(x) = j\ \AND w({\Pi_o}[s, t]) > y$} \label{line:conditionoracle}
                        \State ${\Pi_o}[s, t] \gets \textsc{Recover}(\Pi_o, s, t, j)$ \label{line:oracleupdate}
                    \EndIf
                \EndFor
            \EndFor
            \State $A = \textsc{RandGetShortestPaths}(G, H)$
            \For {$s, t \in \ver$}
                \State $A_{s, t} \gets \min(A_{s, t}, w({\Pi_o}[s, t]))$
            \EndFor
            \State \Return $A$
        \EndFunction
    \end{algorithmic}
\end{algorithm}

\subsubsection{Running Time Analysis}
Recall the hypothetical data structure mentioned before where we obtain the full array of $\pi_i[*, *, *]$ for each layer $i$ in \cref{alg:sample} by executing line \ref{line:getpaths} for every index $k$. Assume that in that hypothetical data structure, for every update, instead of calling \cref{alg:oracleupdate} we call \cref{alg:basicupdate}. Recall that in the hypothetical data structure, \cref{lemma:inductionbase} and Invariant (4) from \cref{lemma:invariants} are both still true. We show that \cref{alg:oracleupdate} runs in $\tilde O(n ^ {2.5})$ time by relating its total recovery cost to the total recovery cost of \cref{alg:basicupdate} in the hypothetical data structure. We prove the following Lemma:
\begin{lemma} \label{lemma:updatecostbound}
    If the unique shortest $s$-$t$ path $p$ has a hop level $j \le i_h$ and gets recovered on line \ref{line:oracleupdate} of \cref{alg:oracleupdate}, then in the hypothetical data structure, $\mathrm{NeedsRecovery}[s, t, j ^ {\prime}]$ is TRUE in \cref{alg:basicupdate} for some $j ^ {\prime} \in \{j, j + 4\}$.
\end{lemma}
\begin{proof}
    Assume that $p$ gets recovered on line \ref{line:oracleupdate} of \cref{alg:oracleupdate} and that $\mathrm{NeedsRecovery}[s, t, j]$ is FALSE in \cref{alg:basicupdate}. By the second assumption and \cref{lemma:inductionbase}, in \cref{alg:basicupdate}, $\pi ^ {\prime}[s, t, j] = p$ and therefore we can find some $i \in [0, L]$ such that $\pi_i[s, t, j] = p$. By the first assumption, in \cref{alg:oracleupdate}, $w(\Pi ^ {\prime}[s, t]) > w(p)$ since otherwise the second condition on line \ref{line:conditionoracle} will never be true. Therefore $w(\Pi_i[s, t, j]) > w(p)$. From \cref{lemma:correctpi}, $p$ must not have been a strongly-hop-dominant $s$-$t$ path through $C_i$ in $G \backslash O_i$. From Invariant (4) in \cref{lemma:invariants}, $\pi_i[s, t, j + 4]$ is not longer than $h_{j + 4}$-hop shortest $s$-$t$ path through $C_i$ in $G \backslash O_i$. Since $h_{j + 4} > 4h_j$, $w(\pi_i[s, t, j + 4]) < w(p)$. Since $p$ is the shortest path after the batch deletion procedure, $\pi_i[s, t, j + 4]$ must have been deleted in the batch deletion in \cref{alg:basicupdate}, which makes $\mathrm{NeedsRecovery}[s, t, j + 4]$ TRUE.
\end{proof}

\begin{lemma} \label{lemma:oracleupdatecost}
    The total update cost for triples $(s, t, j)$ involved on line \ref{line:oracleupdate} is $\tilde O(n ^ {2.5})$ with high probability, and the running time of \cref{alg:oracleupdate} is $\tilde O(n ^ {2.5})$ with high probability.
\end{lemma}
\begin{proof}
    From \cref{lemma:updatecostbound} it is easy to see that the total recovery cost in \cref{alg:oracleupdate} is at most $(3 / 2) ^ 4$ times the recovery cost in \cref{alg:basicupdate} of the hypothetical data structure, which is $\sum_{i}\sum_{j}\sum_{u \in D_i}{\overline{\cg}_i[j, u]}$. From \cref{lemma:implicit} with high probability,
    \begin{align*}
        \sum_{i}\sum_{j}\sum_{u \in D_i}{\overline{\cg}_i[j, u]}     &\le \sum_{i}\sum_{j}\sum_{u \in D_i}{(2c + 9)\tau_i}  \\
                                                                    &= O\left(\sum_{i}i_h\abs{D_i}\tau_i\right) \\
                                                                    &= \tilde O\left(\sum_{i}\abs{D_i}\tau_i\right) \\
                                                                    &= \tilde O\left (\sum_{i}2 ^ i \times n ^ {2.5} / 2 ^ i \right) \\
                                                                    &= \tilde O\left (\sum_{i}n ^ {2.5} \right) \\ 
                                                                    &= \tilde O(n ^ {2.5}).
    \end{align*}
    which means that the total recovery cost in \cref{alg:oracleupdate} is also $\tilde O(n ^ {2.5})$ with high probability.

    The total running time of line \ref{line:oracleupdate} is proportional (up to a logarithmic factor) to the total recovery cost with high probability, which is $\tilde O(n ^ {2.5})$. It is easy to verify that the rest of \cref{alg:basicupdate} runs in $\tilde O(n ^ {2.5})$ time. Therefore, the running time of \cref{alg:basicupdate} is $\tilde O(n ^ {2.5})$ with high probability.
\end{proof}

\subsubsection{Correctness}
We now show the correctness of \cref{alg:oracleupdate}.
\begin{lemma} \label{lemma:oraclecorrect}
    With high probability, at the end of \cref{alg:oracleupdate}, for every $(s, t) \in \ver ^ 2$ such that the shortest $s$-$t$ path has a hop at most $H$, ${\Pi_o}[s, t]$ is the shortest $s$-$t$ path, and for every $(s, t) \in \ver ^ 2$, $A_{s, t}$ is the length of the shortest $s$-$t$ path.
\end{lemma}
\begin{proof}
    We do induction on the hop level $j$ of the shortest paths.
    
    Suppose for every $(s, t) \in \ver ^ 2$ such that the shortest $s$-$t$ path has a hop at most $h_{j - 1}$, ${\Pi_o}[s, t]$ is the shortest $s$-$t$ path. Then for the hop level $j$, if for some $(s, t) \in \ver ^ 2$ such that the shortest $s$-$t$ path has a hop between $h_{j - 1} + 1$ and $h_{j}$, then when the for loop on line \ref{line:oracleforloop} reaches hop level $j$, on line \ref{line:conditionoracle}, either the second condition is true and ${\Pi_o}[s, t]$ is recovered to be the $h_j$-hop shortest $s$-$t$ path due the same argument in the proof of \cref{lemma:inductionstep}, which by assumption is the shortest $s$-$t$ path, or the second condition is false and ${\Pi_o}[s, t] = \Pi ^ {\prime}[s, t]$ is already the shortest $s$-$t$ path. By induction, for every $(s, t) \in \ver ^ 2$ such that the shortest $s$-$t$ path has a hop at most $H$, ${\Pi_o}[s, t]$ is the shortest $s$-$t$ path.

    From a similar argument as the one in \cref{lemma:basiccorrectness}, with high probability, for every $(s, t) \in \ver ^ 2$, $A_{s, t}$ is the length of the shortest $s$-$t$ path.
\end{proof}

\subsection{Computing Single-Source Strongly-Hop-Dominant Paths} \label{sec:ssahdp}
Finally, we prove \cref{lemma:computedominant}.
\computedominant*
To introduce the algorithm $\textsc{SSAHDP}(G, H, s)$ that computes the single source strongly-hop-dominant paths, we first introduce a helper algorithm $\textsc{SSHDP}(G, h, s)$ that computes the $h$-hop-dominant paths for a given parameter $h$. Our algorithm is based on Dijkstra's single source shortest path algorithm. We find the unextended vertex with the shortest distance from the source and try to extend from that vertex, but different from Dijkstra's algorithm, we constrain that the hop does not exceed the given limit $h$ at any time. The details are shown in \cref{alg:sshdp}.
\begin{algorithm} [H]
    \caption{Single Source Hop Dominant Paths} \label{alg:sshdp}
    \begin{algorithmic}[1]
        \Function{\textsc{SSHDP}}{$G, h, s$}
            \For {$v \in \ver \backslash s$}
                \State {$\pi[v] \gets \perp$}
            \EndFor
            \State {$\pi[s] \gets \langle s\rangle$} 
            \State ${\ver}_0 \gets \ver$
            \While {${\ver}_0 \cap \{v \mid \pi[v] \ne \perp\} \ne \emptyset$}
                \State {$u \gets \argmin\{w(\pi[v ^ {\prime}]) \mid v ^ {\prime} \in {\ver}_0\}$}
                \State {${\ver}_0 \gets {\ver}_0 \backslash u$}
                \If {$\abs{\pi[u]} < h$}
                    \For {$v \in {\ver}_0$}
                        \If {$w(\pi[v]) > w(\pi[u] \circ \langle u, v \rangle)$ \OR ($w(\pi[v]) = w(\pi[u] \circ \langle u, v \rangle)$ \AND $\abs{\pi[v]} > \abs{\pi[u]} + 1$)} \label{line:sshdpcondition}
                            \State {$\pi[v] \gets \pi[u] \circ \langle u, v \rangle$}
                        \EndIf
                    \EndFor
                \EndIf
            \EndWhile
            \State \Return $\{\pi[v] \mid v \in \ver\} \backslash \perp$  
        \EndFunction
    \end{algorithmic}
\end{algorithm}
\cref{alg:sshdp} can be optimized using a Fibonacci Heap to $O(n \log n + m)$ time and $O(n + m)$ space on an $n$-vertex, $m$-edge graph, similar to the standard Dijkstra's Algorithm \cite{dijkstra, tarjanheap}. The output paths can be stored in $O(n)$ space if we store the paths as a rooted tree. Here since $m = O(n ^ 2)$, the running time is $\tilde O(n ^ 2)$, and the space usage is $O(n ^ 2)$.

\begin{proof} [Proof of \cref{lemma:computedominant}]
    We first show that in \cref{alg:sshdp} the set $\{\pi[v] \mid v \in V\}$ contains all the $h$-hop-dominant paths. We show that if an $h$-hop-dominant path exists from $s$ to $v$, then $\pi[v]$ is the $h$-hop-dominant path. Suppose an $h$-hop-dominant path $p$ exists between $s$ and $v$, but $\pi[v] \ne p$. Since we assume that hop-restricted shortest paths are unique and $\pi[v]$ is either $\perp$ or a path with a hop at most $h$, we have $w(\pi[v]) > w(p)$. Let $i > 0$ be the minimum index such that either $w(\pi[p[i]]) \ne w(p[0, i])$ or $\abs{\pi[p[i]]} > \abs{p[0, i]}$. Since $w(\pi[p[i - 1]]) = w(p[0, i - 1])$ and $\abs{\pi[p[i - 1]]} \le \abs{p[0, i - 1]} < h$, from the algorithm we know $w(\pi[0, p[i]]) \le w(p[0, i])$ and if $w(\pi[0, p[i]]) = w(p[0, i])$, $\abs{\pi[0, p[i]]} \le \abs{p[0, i]}$. From the choice of $i$, we have $w(\pi[0, p[i]]) < w(p[0, i])$. Therefore the path $\pi[p[i]] \circ p[i, h]$ has a shorter length than $p$ and has a hop of at most $h + h - i \le 2h$, which contradicts the fact that $p$ is $h$-hop-dominant.

    In order to include all single source strongly-hop-dominant paths with a hop \emph{at most} $H$, run \cref{alg:sshdp} with $h = 1, 2, 4, \cdots, 2 ^ {\ceil{\log_2{H}}}$. If a path $p$ is strongly-hop-dominant, then it is $2 ^ {\ceil{\log_2{\abs{p}}}}$-hop-dominant. Since there are $O(\log n) = \tilde O(1)$ possible values of $h$, the total running time is still $\tilde O(n ^ 2)$, and the output paths can be stored in $\tilde O(n)$ space. The space usage is $O(n ^ 2)$ since we can discard all temporary space usage after each run.
\end{proof}

\section{Final Data Structure Without The Oracle} \label{sec:step2}
Now we assume that we no longer have access to the oracle used in \cref{alg:oracleupdate} in the actual data structure. Suppose we run another hypothetical data structure (distinct from the one described earlier in \cref{sec:randomization}) in parallel with the actual data structure with the same random seeds but with access to the oracle. Throughout the section when we refer to \cref{alg:oracleupdate}, we refer to the one inside the hypothetical data structure. Consider \cref{alg:oracleupdate}. We can categorize the $(s, t) \in \ver ^ 2$ pairs where the shortest $s$-$t$ path has a hop at most $H$ into two types:
\begin{itemize}
    \item Primary: if $\Pi_o[s, t] = \Pi ^ {\prime}[s, t]$, that is, the shortest $s$-$t$ path has been present before the start of the for loop on line \ref{line:oracleforloop}, the pair $(s, t)$ and the path $\Pi_o[s, t]$ are called \emph{primary}.
    \item Secondary: if path $\Pi_o[s, t] \ne \Pi ^ {\prime}[s, t]$, that is, the shortest $s$-$t$ path is recovered on line \ref{line:oracleupdate}, the pair $(s, t)$ and the path $\Pi_o[s, t]$ are called \emph{secondary}.
\end{itemize}

In \cref{alg:oracleupdate}, every time we try to recover a secondary pair of vertices $(s, t)$ where the hop level of the shortest $s$-$t$ path is $j$, we go through candidates $v$ such that $X[s, v, t, j] = 1$ and use the concatenation of paths $\Pi_o[s, v]$ and $\Pi_o[v, t]$ as a candidate. There are two types of candidates here:
\begin{enumerate}
    \item Type I: both of $(s, v)$ and $(v, t)$ are primary. 
    \item Type II: one or both of $(s, v)$ and $(v, t)$ are secondary. 
\end{enumerate}
Note that ``primary'' and ``secondary'' are for $(s, t)$ pairs and ``type I'' and ``type II'' are for concatenations.

We will first introduce how we can deal with type I candidates, and then we will deal with type II candidates.

\subsection{Type I candidates} \label{sec:typei}
Given the paths $\Pi ^ {\prime}[*, *]$ computed in \cref{alg:oracleupdate}, for each pair of vertices $(s, t)$ we define an $s$-$t$ \emph{concatenation} to be any path of the form $\Pi ^ {\prime}[s, v] \circ \Pi ^ {\prime}[v, t]$ for some $v \in \ver$. Let $j$ be the hop level of the shortest path between $s$ and $t$. The concatenation is called \emph{hitting set compatible} if $X[s, v, t, j] = 1$. We consider the auxiliary problem of maintaining \emph{all-pairs shortest hitting set compatible concatenations} subject to vertex deletions. Specifically, for each pair of vertices $(s, t) \in \ver ^ 2$, we will find an $s$-$t$ path in the current graph that is at most as long as the shortest $s$-$t$ concatenation that is hitting set compatible.

This task is similar to the problem of maintaining APSP subject to vertex deletion, and it is very easy to design an auxiliary data structure similar to the previous data structure for dynamic APSP by Gutenberg and Wulff-Nilsen and attach it to our existing data structure. Intuitively, since we only concatenate two primary paths, it is almost as if we only need to maintain all-pairs 2-hop shortest paths. We now introduce how we implement such an auxiliary data structure in detail.

We will use the following lemma to be proven in \cref{sec:concat}:
\begin{restatable}{lemma}{concat} \label{lemma:concat}
    Given a set of paths $P$ with hop at most $h$ inside a graph $G := \langle \ver, \e \rangle$ with $n = \abs{\ver}$, and a vertex $c \in \ver$. Suppose that for every path $p \in P$, the following operations can be done in $O(1)$ time:
    \begin{itemize}
        \item Given a vertex $c \in \ver$, check whether $c \in p$;
        \item Given a vertex $c \in \ver$, find the path $p[<c]$;
        \item Given a vertex $c \in \ver$, find the path $p[>c]$.
    \end{itemize}
    Then there is an algorithm $\textsc{Concat}(P, c)$ in $\tilde O(\abs{P} + n ^ 2)$ time and space that can compute an array $\pi[*, *]$ where for any pair of vertices $(s, t) \in \ver ^ 2$, $\pi[s, t]$ is either an $s$-$t$ path with a hop of $O(h)$ in $G$ or $\perp$, such that for all $p_0, p_1 \in P$ such that $c \in p_0 \circ p_1$ and $p_0 \circ p_1$ is an $s$-$t$ path, $w(\pi[s, t]) \le w(p_0 \circ p_1)$. 
\end{restatable} 

\subsubsection{Final Preprocessing Algorithm}
For the preprocessing algorithms, our idea is to find a set of good concatenations such that after each batch deletion, only a small amount of these concatenations need to be recomputed. Whenever a layer $i$ is preprocessed, we first obtain $O_i$ and $\Pi_i[*, *, *]$ in the same way as in \cref{alg:newpreprocessing}. Then for each hop level $j$ and pair of vertices $(s, t) \in \ver ^ 2$, we perform an additional routine to cover the possible concatenations of some path $\Pi_{i_0}[s, v, j_0]$ ($(v, j_0) \in \ver \times [0, j]$) from layer $i_0 \in [i, L]$, and some path $\Pi_{i_1}[v, t, j_1]$ ($(t, j_1) \in \ver \times [0, j]$) from layer $i_1 \in [i, L]$, and we keep track of the congestion caused by these concatenations and remove \emph{additional congested vertices} in a similar fashion as the algorithms before. We adopt the randomization technique from Gutenberg and Wulff-Nilsen's paper. We go through the centers in $C_i$ in a uniformly random order. For every center, we use \cref{lemma:concat} to cover the concatenations involving that center avoiding the set of additional congested vertices stored in the set $\overline{O}_i$, and then we compute the congestion increase on the vertices and add the additional congested vertices to $\overline{O}_i$. Finally, we let $C_{i - 1}$ be the union of the set $O_i$ of congested vertices and the set $\overline{O}_i$ of additional congestedvertices. \textbf{This preprocessing algorithm is final as for type II candidates we do not need to take extra steps in the preprocessing algorithm.} The algorithm is described in \cref{alg:finalpreprocessing}.
\begin{algorithm} [H]
    \caption{Final Data Structure: Preprocessing c.f. Algorithm 4 in \cite{previous}} \label{alg:finalpreprocessing}
    \begin{algorithmic}[1]
        \Procedure{\textsc{Preprocessing}}{$i, C_i$}
            \State Compute $O_i$ and $\Pi_i[*, *, *]$ in the same way as in \cref{alg:newpreprocessing}
            \State $C ^ {\prime} \gets \emptyset$
            \State Set threshold ${\tau}_i \gets n ^ {2.5} / 2 ^ i$ 
            \State $\overline{\Pi}_i[*, *, *] \gets \perp$
            \State $\overline{O}_i \gets \emptyset$
            \State Initialize $\overline{\cg}_i[v] \gets 0$ for all $v \in \ver$
            \For {$c \in C_i$ in a uniformly random order}
                \Do
                    \For {$j \in [0, i_h]$}
                        \State $P \gets \{\Pi_{i ^ {\prime}}[s, t, j ^ {\prime}] \mid (i ^ {\prime}, s, t, j ^ {\prime}) \in [i, L] \times \ver ^ 2 \times [0, j] \wedge \Pi_{i ^ {\prime}}[s, t, j ^ {\prime}] \cap \overline{O}_i = \emptyset\}$
                        \State $\phi[*, *, j] \gets \textsc{Concat}(P, c)$
                    \EndFor 
                    \For {$(s, t, j) \in \ver ^ 2 \times [0, i_h]$}
                        \If {$w(\phi[s, t, j]) < \overline{\Pi}_i[s, t, j]$}
                            \State $\overline{\Pi}_i[s, t, j] \gets \phi[s, t, j]$
                            \For {$u \in \overline{\Pi}_i[s, t, j]$}
                                \State $\overline{\cg}_i[v] \gets \overline{\cg}_i[v] + n / h_j$ \label{line:addtooverlinecg}
                            \EndFor
                        \EndIf
                        \If {$\overline{O}_i \ne \{v \in \ver \mid \overline{\cg}[v] > {\tau}_i\}$}
                            \State $\overline{O}_i \gets \{v \in \ver \mid \overline{\cg}[v] > {\tau}_i\}$
                            \For {$j \in [0, i_h]$}
                                \State $P \gets \{\Pi_{i ^ {\prime}}[s, t, j ^ {\prime}] \mid (i ^ {\prime}, s, t, j ^ {\prime}) \in [i, L] \times \ver ^ 2 \times [0, j] \wedge \Pi_{i ^ {\prime}}[s, t, j ^ {\prime}] \cap \overline{O}_i = \emptyset\}$
                                \State $\phi[*, *, j] \gets \textsc{Concat}(P, c)$
                            \EndFor 
                        \EndIf
                    \EndFor
                \doWhile {$\overline{O}_i \ne \emptyset$}
            \EndFor
            \State $C_{i - 1} \gets O_i \cup \overline{O}_i$
        \EndProcedure
    \end{algorithmic}
\end{algorithm}

Firstly, we will show that the conditions in \cref{lemma:concat} are satisfied for $P$.
\begin{lemma} \label{lemma:satisfied}
    For every $(i, s, t, j) \in [0, L] \times \ver ^ 2 \times [0, i_h]$, let $p := \Pi_i[s, t, j]$. If $p \ne \perp$, the following operations can be done in $O(1)$ time without using more than $\tilde O(n ^ 2)$ space:
    \begin{itemize}
        \item Given a vertex $c \in \ver$, check whether $c \in p$;
        \item Given a vertex $c \in p$, find the path $p[<c]$;
        \item Given a vertex $c \in p$, find the path $p[>c]$.
    \end{itemize}
\end{lemma}
\begin{proof}
    Observe \cref{alg:finalpreprocessing} and note that $p$ is the concatenation of two paths that are outputs of \cref{alg:sshdp}. Since the outputs of \cref{alg:sshdp} can be stored as a rooted tree, it suffices to perform two queries on two rooted trees of the following type: given two nodes $c$ and $x$, check whether $c$ is on the path from the root to node $x$, and if so, find the position of $c$ on the path. This can be done using standard techniques on static $n$-node rooted trees in $\tilde O(1)$ time per query using $\tilde O(n)$ space\footnote{One such technique is to get the DFS traversal of the tree. The sub-tree of each node is an interval in the DFS traversal. The problem is equivalent to checking if $x$ is in the sub-tree of $c$, which can be done by simply checking if $x$ is in the interval of $c$ in the DFS traversal. The position of $c$ on the path is simply the depth of $c$ in the tree.}. Since there are $\tilde O(n)$ function calls to \cref{alg:sshdp}, there are $\tilde O(n)$ such trees in total, and the total space usage is $\tilde O(n ^ 2)$. To store $p[<c]$ and $p[>c]$, we can see that either of them is a concatenation of at most two paths on two trees. It suffices to simply store the pointers to the starting and the ending nodes on the trees.
\end{proof}

The reader can compare the following lemmata to Lemma 4.3 in \cite{previous}\footnote{Lemma 4.5 in the full version on arXiv and Lemma 4.3 in the proceedings version.}.
\begin{lemma} \label{lemma:wahrvawlfasfn}
    For some globally fixed constant $\gamma > 0$, with a probability at least $1 / 2$, \cref{alg:finalpreprocessing} preserves the following invariant: $$\sum_{v \in \ver}{\overline{\cg}_i[v]} \le \gamma n ^ 3(\log_2{n}) ^ 2.$$ 
\end{lemma}
\begin{proof}
    Firstly, for any $(s, t, j) \in \ver ^ 2 \times [0, i_h]$, for a fixed center $c$, $\overline{\Pi}_i[s, t, j]$ gets updated at most once since adding more vertices to $\overline{O}_i$ does not improve $\phi[s, t, j]$. From \cref{lemma:concat}, $\phi[s, t, j]$ has a hop of at most $O(h_j)$, and therefore every time $\overline{\Pi}_i[s, t, j]$ is updated, at most $O(h_j)$ vertices receive a congestion increase of $n / h_j$. Therefore for $(s, t, j)$, the total increase in total congestion due to center $c$ is at most $O(h_j) \times n / h_j = O(n)$.
    
    Fix $(s, t, j) \in \ver ^ 2 \times [0, i_h]$. Suppose the current center $c$ is the $k$-th chosen center. The probability that the $k$-th center updates $\overline{\Pi}_i[s, t, j]$ is the probability that the following event $E_0$ happens: $\phi[s, t, j]$ computed at this center $c$, using some congested set $\overline{O}_i$, is shorter than $\phi[s, t, j]$ computed at all previous centers, using what the set $\overline{O}_i$ was at those centers. Consider the following event $E_1$: $\phi[s, t, j]$ computed at this center $c$ using $\overline{O}_i$, is shorter than $\phi[s, t, j]$ computed at previous centers using $\overline{O}_i$ \emph{at the current center}. Since we only add vertices to $\overline{O}_i$, $\phi[s, t, j]$ computed from previous centers can only be the same or longer in $E_1$ compared to in $E_0$. Therefore event $E_0$ implies event $E_1$, and therefore $\mathrm{P}(E_0) \le \mathrm{P}(E_1)$. In $E_1$, since the set $\overline{O}_i$ is fixed, the sequence of computation does not matter and $\mathrm{P}(E_1)$ is equal to the probability that among the first $k$ centers, $\phi[s, t, j]$ is the shortest when computed at the $k$-th center using $\overline{O}_i$, and such probability is obviously $1 / k$. Thus $\mathrm{P}(E_0) \le \mathrm{P}(E_1) = 1 / k$. Let $Y[s, t, j]$ be the expected amount of congestion added on line \ref{line:addtooverlinecg}. We have $$E[Y[s, t, j]] \le \sum_{1 \le k \le \abs{C_i}}{1 / k \times O(n)} = O(nH_n) = O(n \log n),$$ where $H_n = O(\log n)$ is the $n$-th harmonic number. Since there are $O(\log n)$ choices of $j$'s and $O(n ^ 2)$ choices of $s$ and $t$, the expected total congestion $E[\sum_{v \in \ver}{\overline{\cg}_i[v]}] \le \frac{\gamma}{2} n ^ 3 (\log_2{n}) ^ 2$ for some globally fixed constant $\gamma > 0$. From Markov's inequality, $E[\sum_{v \in \ver}{\overline{\cg}_i[v]}] \le \gamma n ^ 3 (\log_2{n}) ^ 2$ with a probability of at least $1 / 2$.
\end{proof}

\begin{lemma} \label{lemma:invariantstypei}
    With a probability of at least $1 / 2$, \cref{alg:finalpreprocessing} preserves the following invariants:
    \begin{enumerate}
        \item $\forall (j, v) \in [0, i_h] \times \ver: \overline{\cg}_i[v] \le 2{\tau}_i$,
        \item $\sum_{v \in \ver}{\overline{\cg}_i[v]} = O(n ^ 3\log ^ 2 n)$, 
        \item $\abs{\overline{O}_i} = O(n ^ 3 \log ^ 2 n / {\tau}_i) = \tilde O(n ^ {0.5} 2 ^ i)$,
        \item For every $(s, t, j) \in \ver ^ 2 \times [0, i_h]$, $\overline{\Pi}_i[s, t, j])$ is an $s$-$t$ path with a hop of $O(h_j)$ that at most as long as the shortest $s$-$t$ path not visiting a vertex in $\overline{O}_i$ that is of the form $\Pi_{i_0}[s, v, j_0] \circ \Pi_{i_1}[v, t, j_1]$ for some $(i_0, i_1, v, j_0, j_1) \in [i, L] ^ 2 \times \ver \times [0, j] ^ 2$ where $c \in \Pi_i[v, t, j_1]$.
    \end{enumerate}
    \cref{alg:finalpreprocessing} runs in $\tilde O(n ^ {2.5} 2 ^ i)$ time.
\end{lemma}
\begin{proof}
    Since a single path increases the congestion on a vertex by at most $n$, $\overline{\cg}[v] < {\tau}_i + n < 2{\tau}_i$ for every vertex $v$, which proves (1). (2) is straightforward from \cref{lemma:wahrvawlfasfn}. Since each vertex $v \in \overline{O}_i$ is such that $\overline{\cg}_i[v] \ge {\tau}_i$, $\abs{\overline{O}_i} \le O(n ^ 3 \log ^ 2 n / {\tau}_i) = \tilde O(n ^ {0.5} 2 ^ i)$, which proves (3). (4) is not hard to see from the construction of the set of paths $P$ in the algorithm, the correctness of \cref{lemma:concat}, and the fact that the graph $G \backslash \overline{O}_i$ is decremental.

    For every recomputation of $\phi[*, *, *]$, $\abs{P} = \tilde O(n ^ 2)$. It is easy to see that \cref{lemma:concat} applies for $P$ from \cref{lemma:satisfied}. From \cref{lemma:concat} the recomputation takes $\tilde O(n ^ 2)$ time. Since before every recomputation of $\phi[*, *, *]$ either some vertex is added to $\overline{O}_i$ or we move on to the next center in $C_i$, there are $O(\abs{C_i} + \abs{\overline{O}_i})$ recomputations in total and the total running time for these recomputations is $\tilde O((\abs{C_i} + \abs{\overline{O}_i})n ^ 2) = \tilde O(n ^ {2.5} 2 ^ i)$. The rest of the algorithm clearly runs in $\tilde O(n ^ {2.5} 2 ^ i)$ time, which proves the claim.
\end{proof}
For any arbitrarily large constant $c ^ {\prime}$, we can repeat the procedure $c ^ {\prime} \log_2 {n} = \tilde O(1)$ times until the invariants are met, which magnifies the probability of success to $1 - n ^ {-c ^ {\prime}}$.

\subsubsection{Update Algorithm For Type I Candidates}
The update procedure for all-pairs shortest concatenation is similar to \cref{alg:basicupdate}, but recovery is done using two paths in $\Pi ^ {\prime}$. The update procedure is described in \cref{alg:updateconcatenation}.
\begin{algorithm} [H]
    \caption{Update Procedure for Concatenations Only} \label{alg:updateconcatenation}
    \begin{algorithmic}[1]
        \Function{\textsc{Recover}}{$\Pi ^ {\prime}, s, t, j$} \Comment{Recover ${\pi ^ {\prime}}[s, t]$ using the hitting set for hop level $j$}
            \State $\eta \gets \{v \mid X[s, v, t, j] = 1\} \cup \{s, t\}$ \Comment{Construct a hitting set.}
            \State $x ^ {\prime} \gets \textrm{argmin}_{x \in \eta}{w({\Pi ^ {\prime}}[s, x] \circ {\Pi ^ {\prime}}[x, t])}$ \label{line:recoveryhittingset}
            \State \Return ${\Pi ^ {\prime}}[s, x ^ {\prime}] \circ {\Pi ^ {\prime}}[x ^ {\prime}, t]$
        \EndFunction
        \Function{\textsc{UpdateForConcatenations}}{}
            \State Compute $\Pi ^ {\prime}$ using the part from line \ref{line:oraclegetpiprimebegin} to line \ref{line:oraclegetpiprimeend} from \ref{alg:oracleupdate}.
            \State $\overline{\Pi ^ {\prime}} \gets \perp$
            \State $\mathrm{NeedsRecoveryTypeI}[*, *, *] \gets \mathtt{FALSE}$ 
            \For {$i \in [0, L]$}
                \For {$j \in [0, i_h]$}
                    \For {$s, t \in \ver$}
                        \If {$\overline{\Pi}_i[s, t, j] \cap D_i = \emptyset$}
                            \If {$w(\overline{\Pi}_i[s, t, j]) < w({\overline{\Pi ^ {\prime}}}[s, t, j])$}
                                \State ${\overline{\Pi ^ {\prime}}}[s, t, j] \gets \overline{\Pi}_i[s, t, j]$ \label{line:fjakldgfvzskdlgjkselt}
                            \EndIf
                        \Else
                            \State $\mathrm{NeedsRecoveryTypeI}[s, t, j] = \mathtt{TRUE}$
                        \EndIf
                    \EndFor
                \EndFor
            \EndFor
            \For {$j \in [0, i_h]$}
                \For {$s, t \in \ver$}
                    \If {$\mathrm{NeedsRecoveryTypeI}[s, t, j] = \mathtt{TRUE}$}
                        \State ${\overline{\Pi ^ {\prime}}}[s, t, j] \gets \textsc{Recover}(\Pi ^ {\prime}, s, t, j)$ \label{line:recoveroverlinepi}
                    \EndIf
                \EndFor
            \EndFor
            \State \Return $\overline{\Pi ^ {\prime}}$
        \EndFunction
    \end{algorithmic}
\end{algorithm}
To execute line \ref{line:findargmin} in $\tilde O(1)$ time, one can use a Fibonacci Heap, similar to the standard Dijkstra's Algorithm \cite{dijkstra, tarjanheap}.

\begin{lemma}
    \Cref{alg:updateconcatenation} runs in $\tilde O(n ^ {2.5})$ time with high probability.
\end{lemma}
\begin{proof}
    This can be derived from a similar argument as the one used to prove \cref{lemma:basicupdaterunningtime}.
\end{proof}

The next lemma shows the correctness of \cref{alg:updateconcatenation}.
\begin{lemma} \label{lemma:typei}
    At the end of the procedure, for every $(s, t) \in \ver ^ 2$, if some shortest hitting set compatible $s$-$t$ concatenation is the shortest $s$-$t$ path, then $\overline{\Pi ^ {\prime}}[s, t, i_h]$ is the shortest $s$-$t$ path.
\end{lemma}
\begin{proof}
    Suppose some shortest hitting set compatible $s$-$t$ concatenation is the shortest $s$-$t$ path. Let $j$ be the hop level of the shortest $s$-$t$ path. Let some shortest hitting set compatible $s$-$t$ concatenation be $\Pi ^ {\prime}[s, v] \circ \Pi ^ {\prime}[v, t]$. Suppose $\Pi ^ {\prime}[s, v]$ comes from $\Pi_{i_0}[s, v, j_0]$ and $\Pi ^ {\prime}[v, t]$ comes from $\Pi_{i_1}[v, t, j_1]$. Since $\abs{\Pi ^ {\prime}[s, v]} \le h_j$, we have $j_0 \le j$. Similarly $j_1 \le j$. It suffices to show that $\overline{\Pi ^ {\prime}}[s, t, j]$ is the shortest $s$-$t$ path, since if this is the case, since on line \ref{line:recoveryhittingset} both $s$ and $t$ are included in $\eta$, $\overline{\Pi ^ {\prime}}[s, t, i_h]$ will also be the shortest $s$-$t$ path. 
    
    Suppose $\mathrm{NeedsRecoveryTypeI}[s, t, j] = \mathtt{FALSE}$. Let $i ^ {\prime}$ be the smallest index such that either $\Pi ^ {\prime}[s, v] \cap C_{i ^ {\prime}} \ne \emptyset$ or $\Pi ^ {\prime}[v, t] \cap C_{i ^ {\prime}} \ne \emptyset$. Then $\Pi ^ {\prime}[s, v] \cap C_{i ^ {\prime} - 1} = \emptyset$ and $\Pi ^ {\prime}[v, t] \cap C_{i ^ {\prime} - 1} = \emptyset$ which means that $\Pi ^ {\prime}[s, v] \cap O_{i ^ {\prime}} = \emptyset$ and that $\Pi ^ {\prime}[v, t] \cap O_{i ^ {\prime}} = \emptyset$ (For the case of $i = 0$, note that layer $0$ is done by brute force and therefore we can assume that $O_0 = \emptyset$). Note that $\Pi ^ {\prime}[s, v] = \Pi_{i_0}[s, v, j_0]$ is through $C_{i_0}$ and $\Pi ^ {\prime}[v, t] = \Pi_{i_1}[v, t, j_1]$ is through $C_{i_1}$, which means that $\Pi ^ {\prime}[s, v] \cap C_{i_0} \ne \emptyset$ and $\Pi ^ {\prime}[v, t] \cap C_{i_1} \ne \emptyset$. Therefore we have $i ^ {\prime} \le \min(i_0, i_1)$. From Invariant (4) in \cref{lemma:invariantstypei}, $\overline{\Pi}_{i ^ {\prime}}[s, t, j]$ is not longer than $\Pi ^ {\prime}[s, v] \circ \Pi ^ {\prime}[v, t]$. Since this path will be used to update $\overline{\Pi ^ {\prime}}[s, t, j]$ on line \ref{line:fjakldgfvzskdlgjkselt}, at the end of \cref{alg:updateconcatenation}, $\overline{\Pi ^ {\prime}}[s, t, j]$ is not longer than the shortest hitting set compatible $s$-$t$ concatenation. Since the shortest hitting set compatible $s$-$t$ concatenation is the shortest $s$-$t$ path and $\overline{\Pi ^ {\prime}}[s, t, j]$ is either $\perp$ or an $s$-$t$ path, $\overline{\Pi ^ {\prime}}[s, t, j]$ is the shortest $s$-$t$ path.

    If $\mathrm{NeedsRecoveryTypeI}[s, t, j] = \mathtt{TRUE}$, then line \ref{line:recoveroverlinepi} will be run for $(s, t, j)$. Since $X[s, v, t, j] = 1$, $v$ will be a candidate on line \ref{line:recoveryhittingset}, and therefore $w(\overline{\Pi ^ {\prime}}[s, t, j]) \le w(Pi ^ {\prime}[s, v] \circ Pi ^ {\prime}[v, t])$. At the end of \cref{alg:updateconcatenation}, since $Pi ^ {\prime}[s, v] \circ Pi ^ {\prime}[v, t]$ is the shortest $s$-$t$ path and $\overline{\Pi ^ {\prime}}[s, t, j]$ is either $\perp$ or an $s$-$t$ path, $\overline{\Pi ^ {\prime}}[s, t, j]$ is the shortest $s$-$t$ path.
\end{proof}
We will show in \cref{sec:typeii} that for $(s, t) \in \ver ^ 2$ where the hitting set compatible $s$-$t$ concatenation is not the shortest $s$-$t$ path, some type II candidate can be used to correctly recover the shortest $s$-$t$ path. Therefore, \cref{lemma:typei} will turn out to be sufficient for type I candidates.

\subsubsection{Proof of \cref{lemma:concat}} \label{sec:concat}
Finally, we prove \cref{lemma:concat}. 
\concat*
\begin{proof} [Proof of \cref{lemma:concat}]
    We will only show that there is an algorithm $\textsc{ConcatX}(P, c)$ in $\tilde O(\abs{P} + n ^ 2)$ time and space that can compute an array $\pi[*, *]$ where for any pair of vertices $(s, t) \in \ver ^ 2$, $\pi[s, t]$ is either an $s$-$t$ path with a hop of $O(h)$ in $G$ or $\perp$, such that for all $p_0, p_1 \in P$ such that $c \in p_0$ and $p_0 \circ p_1$ is an $s$-$t$ path, $w(\pi[s, t]) \le w(p_0 \circ p_1)$. We can similarly show that there is an algorithm $\textsc{ConcatY}(P, c)$ in $\tilde O(\abs{P} + n ^ 2)$ time and space that can compute an array $\pi[*, *]$ where for any pair of vertices $(s, t) \in \ver ^ 2$, $\pi[s, t]$ is either an $s$-$t$ path with a hop of $O(h)$ in $G$ or $\perp$, such that for all $p_0, p_1 \in P$ such that $c \in p_1$ and $p_0 \circ p_1$ is an $s$-$t$ path, $w(\pi[s, t]) \le w(p_0 \circ p_1)$. Therefore by taking the element-wise minimum, we get the desired results for $\textsc{Concat}(P, c)$.
    
    The implementation for $\textsc{ConcatX}(P, c)$ is described in \cref{alg:concat}. The algorithm clearly runs in $\tilde O(\abs{P} + n ^ 2)$ time and space. We now show that in \cref{alg:concat}, for any pair of vertices $(s, t) \in \ver ^ 2$, $\pi[s, t]$ is either an $s$-$t$ path with a hop of $O(h)$ in $G$ or $\perp$, such that for all $p_0, p_1 \in P$ such that $c \in p_0$ and $p_0 \circ p_1$ is an $s$-$t$ path, $w(\pi[s, t]) < w(p_0 \circ p_1)$.

    It is easy to see that $\pi[s, t]$ is either $\perp$ or an $s$-$t$ path not longer than the shortest possible concatenation of the following three parts:
    \begin{itemize}
        \item A $s$-$c$ path $p_{\textrm{from}}$ such that for some $p \in P$, $p_{\textrm{from}}$ is a prefix of $p$,
        \item A $c$-$u$ path $p_{\textrm{to}}$ such that for some $p ^ {\prime} \in P$, $p_{\textrm{to}}$ is a suffix of $p ^ {\prime}$, and
        \item A $u$-$t$ path $p_{\textrm{other}} \in P$.
    \end{itemize}
    Since paths in $P$ have a hop of at most $h$, $\pi[s, t]$ has a hop of at most $3h = O(h)$. Let $p_0, p_1 \in P$ be such that $c \in p_0$ and $p_0 \circ p_1$ is an $s$-$t$ path. Then when $u = p_0[\abs{p_0}] = p_1[0]$, $p_{\textrm{from}} = {p_0}[<c]$, $p_{\textrm{to}} = {p_0}[>c]$, and $p_{\textrm{other}} = p_1$, $p_{\textrm{from}} \circ p_{\textrm{to}} \circ p_{\textrm{other}} = p \circ q$. Therefore, $w(\pi[s, t]) \le w(p_{\textrm{from}} \circ p_{\textrm{to}} \circ p_{\textrm{other}}) = w(p \circ q)$.
\end{proof}
\begin{algorithm} [H]
    \caption{Algorithmic Proof of \cref{lemma:concat}} \label{alg:concat}
    \begin{algorithmic}[1]
        \Procedure{\textsc{concatX}}{$P, c$}
            \State $P_{\textrm{from}}[*] \gets \perp$
            \State $P_{\textrm{to}}[*] \gets \perp$
            \State $\pi[*, *] \gets \perp$
            \For {$p \in P$}
                \If {$w(p) < w(\pi[0, p[\abs{p}]])$}
                    \State $\pi[0, p[\abs{p}]] \gets p$
                \EndIf
                \If {$c \in p$}
                    \If {$w(p[<c]) < w(P_{\textrm{from}}[p[0]])$} \label{line:getpc}
                        \State $P_{\textrm{from}}[p[0]] \gets p[<c]$
                    \EndIf
                    \If {$w(p[>c]) < w(P_{\textrm{to}}[p[\abs{p}]])$}
                        \State $P_{\textrm{to}}[p[\abs{p}]] \gets p[>c]$
                    \EndIf
                \EndIf
            \EndFor
            \For {$(u, t) \in \ver ^ 2$}
                \State $p \gets P_{\textrm{to}}[u] \circ \pi[u, t]$
                \If {$(w(p) < w(\pi[c, t])$}
                    \State $\pi[c, t] \gets p$
                \EndIf
            \EndFor
            \For {$(s, t) \in \ver ^ 2$}
                \State $p \gets P_{\textrm{from}}[s] \circ \pi[c, t]$
                \If {$w(p) < w(\pi[s, t])$}
                    \State $\pi[s, t] \gets p$
                \EndIf
            \EndFor
        \EndProcedure
    \end{algorithmic}
\end{algorithm}

\subsection{Final Update Algorithm With Type II Candidates} \label{sec:typeii}
\subsubsection{Why A Simple Approach Would Not Work}
It might be tempting to employ an approach similar to \cref{alg:basicupdate} for type II candidates. Specifically, after recovering type I paths using \cref{alg:updateconcatenation}, we first let $\Pi_o$ be a copy of $\overline{\Pi ^ {\prime}}$, and then in increasing order of $j$, we go through all $(s, x, t, j)$ with $X[s, x, t, j] = 1$ where both the path $\Pi_o[s, x, j - 1]$ and the path $\Pi_o[x, t, j - 1]$ are available, and recover $\Pi_o[s, t, j]$ using these two paths.

The issue with this approach is that we can no longer relate the running time of the algorithm to the running time of \cref{alg:oracleupdate}. Consider a secondary pair $(s, t)$ where the final shortest path has a large hop. It could be the case that a lot of low-hop paths can be used to recovered low-hop $s$-$t$ paths, but none of these paths are optimal. A lot of computation is wasted on computing low-hop paths which requires larger hitting sets, whereas the in \cref{alg:oracleupdate}, we only need a small hitting set for a large hop. We can see that this will not become an issue if whenever a path is recovered, the recovered path is guaranteed to be optimal. We will enforce this condition using a Dijkstra-like path extension procedure.
\subsubsection{Algorithm Description}
For type II candidates, we consider recovering paths the other way around. Whenever a secondary path from $s$ to $t$ with hop level $j$ gets recovered, we perform an \emph{extension} on the pair $(s, t)$. To do this, we enumerate all other pairs that use the recovered $s$-$t$ path in their recovery, which are $\{(x, t) \mid \sum_{j ^ {\prime} \in [j, i_h]}X[x, s, t, j ^ {\prime}] \ge 1\}$, for which we combine the path from $x$ to $s$ and the path from $s$ to $t$, and $\{(s, y) \mid \sum_{j ^ {\prime} \in [j, i_h]}X[s, t, y, j ^ {\prime}] \ge 1\}$, for which we combine the path from $s$ to $t$ and the path from $t$ to $y$. The core idea here is that at a fixed hop level $j$ the extension cost will be of the same order as the high probability recovery cost, since with high probability $\sum_{x \in \ver, j ^ {\prime} \in [j, i_h]}X[x, s, t, j ^ {\prime}]$ and $\sum_{y \in \ver, j ^ {\prime} \in [j, i_h]}{X[s, t, y, j ^ {\prime}]}$ are both $\sum_{j ^ {\prime} \in [j, i_h]}O(n / h_{j ^ {\prime}}) = O(n / h_j)$, which is of the same order as the recovery cost $c(s, t, j) = n / h_j$. Therefore, if we can make sure that we only perform the extension exactly once for every secondary path, with high probability, the total running time for these extensions will be of the same order as the total update cost of secondary paths, which from \cref{lemma:oracleupdatecost} is $\tilde O(n ^ {2.5})$. To ensure that a secondary path only gets extended once, we recover paths in a similar fashion as Dijkstra's algorithm: we always extend the shortest unextended secondary path. The final update algorithm dealing with both types of candidates is described in \cref{alg:update}.
\begin{algorithm} [H]
    \caption{Final Data Structure: Update} \label{alg:update}
    \begin{algorithmic}[1]
        \Function{\textsc{Update}}{}
            \State Compute $\Pi ^ {\prime}$ using the lines \ref{line:oraclegetpiprimebegin} to \ref{line:oraclegetpiprimeend} from \ref{alg:oracleupdate}.
            \State $\overline{\Pi ^ {\prime}} \gets \textsc{UpdateForConcatenations}()$
            \State $Q = \{\}$
            \For {$(s, t) \in \ver ^ 2$} \label{line:startofforloop}
                \If {$w({\overline{\Pi ^ {\prime}}}[s, t, i_h]) < w({\Pi ^ {\prime}}[s, t])$} \label{line:if4}
                    \State $\Pi_o[s, t] \gets \overline{\Pi ^ {\prime}}[s, t, i_h]$ \label{line:getpathtypei}
                    \State $Q \gets Q \cup \{(s, t)\}$
                \Else
                    \State $\Pi_o[s, t] \gets {\Pi} ^ {\prime}[s, t]$ \label{line:getpath}
                \EndIf
            \EndFor
            \State $A = \textsc{RandGetShortestPaths}(G, H)$
            \While {$Q$ is not empty}
                \State $(s, t) \gets \argmin(\{w(\Pi_o[s, t]) \mid (s, t) \in Q\})$ \label{line:findargmin}
                \If {$w(\Pi_o[s, t]) \le A_{s, t} \cap \abs{\Pi_o[s, t]} \le H$} \label{line:if1}
                    \State $Q \gets Q \backslash \{(s, t)\}$
                    \State $j \gets h ^ {-1}(\abs{\Pi_o[s, t]})$ \label{line:getj}
                    \For {$x$ such that $\sum_{j ^ {\prime} \in [j, i_h]}X[x, s, t, j ^ {\prime}] \ge 1$} \label{line:forx}
                        \State $p \gets \Pi_o[x, s] \circ \Pi_o[s, t]$
                        \If {$w(p) < w(\Pi_o[x, t])$} \label{line:if2}
                            \State $\Pi_o[x, t] \gets p$ \label{line:getpatha}
                            \State $Q \gets Q \cup \{(x, t)\}$
                        \EndIf
                    \EndFor
                    \For {$y$ such that $\sum_{j ^ {\prime} \in [j, i_h]}X[s, t, y, j ^ {\prime}] \ge 1$} \label{line:fory}
                        \State $p \gets \Pi_o[s, t] \circ \Pi_o[t, y]$
                        \If {$w(p) < w(\Pi_o[s, y])$} \label{line:if3}
                            \State $\Pi_o[s, y] \gets p$ \label{line:getpathb}
                            \State $Q \gets Q \cup \{(s, y)\}$
                        \EndIf
                    \EndFor
                \EndIf
            \EndWhile
            \For {$s, t \in \ver$}
                \State $A_{s, t} \gets \min(A_{s, t}, w(\Pi_o[s, t]))$
            \EndFor
            \State \Return $A$
        \EndFunction
    \end{algorithmic}
\end{algorithm}
In order to implement the set $Q$ in $\tilde O(1)$ time per operation and $\tilde O(\abs{Q})$ space, one can use a Fibonacci Heap similar to how it is commonly employed in the Dijkstra's Algorithm \cite{dijkstra, tarjanheap}.

\subsubsection{Correctness}
The next lemma proves the correctness of \cref{alg:update}.
\begin{lemma} \label{lemma:finalcorrectness}
    With high probability, at the end of \cref{alg:update}, for every $(s, t) \in \ver ^ 2$ such that the shortest $s$-$t$ path has a hop at most $H$, ${\Pi_o}[s, t]$ is the shortest $s$-$t$ path, and for every $(s, t) \in \ver ^ 2$, $A_{s, t}$ is the length of the shortest $s$-$t$ path.
\end{lemma}
\begin{proof}
    We do induction on the hop of the shortest paths. Let $(s, t) \in \ver ^ 2$ be such that the shortest $s$-$t$ path has a hop of $h ^ {\prime} \le H$. Suppose that for every other $(s ^ {\prime}, t ^ {\prime}) \in \ver ^ 2$ such that the shortest $s ^ {\prime}$-$t ^ {\prime}$ path has a hop less than $h ^ {\prime}$, ${\Pi_o}[s ^ {\prime}, t ^ {\prime}]$ is the shortest $s ^ {\prime}$-$t ^ {\prime}$ path at the end of \cref{alg:update}.

    Let $j = h ^ {-1}(h ^ {\prime})$. From the same argument as the one used in the proof of \cref{lemma:inductionstep}, with high probability there is some candidate $v \notin (s, t)$ such that $X[s, v, t, j] = 1$ and the concatenation of the shortest $s$-$v$ path and the shortest $v$-$t$ path is the shortest $s$-$t$ path.  

    If $v$ is a type I candidate, from \cref{lemma:typei}, $\overline{\Pi ^ {\prime}}[s, t, i_h]$ is the shortest $s$-$t$ path and $\Pi_o[s, t]$ will be assigned this path on line \ref{line:getpathtypei}. It is easy to see that $\Pi_o[s, t]$ will remain the shortest $s$-$t$ path for the rest of the procedure.

    If $v$ is a type II candidate, then one of $(s, v)$ and $(v, t)$ is secondary. Since both the shortest $s$-$v$ path and the shortest $v$-$t$ path have a hop smaller than $h ^ {\prime}$, by assumption, $\Pi_o[s, v]$ gets assigned the shortest $s$-$v$ path and $\Pi_o[v, t]$ gets assigned the shortest $v$-$t$ path both at some point of the procedure. Without loss of generality suppose that $\Pi_o[s, v]$ gets assigned the shortest $s$-$v$ path after $\Pi_o[v, t]$ gets assigned the shortest $v$-$t$ path. Since one of $(s, v)$ and $(v, t)$ is secondary and a primary path always gets assigned the shortest path on line \ref{line:getpath} before any secondary paths are recovered, $(s, v)$ must be secondary. Therefore $\Pi ^ {\prime}[s, v]$ is not the shortest $s$-$v$ path, and therefore $\Pi_o[s, v]$ must have been assigned the shortest $s$-$v$ path on line \ref{line:getpathtypei}, \ref{line:getpatha} or \ref{line:getpathb}, any of which would have added to $(s, v)$ to $Q$. Note that from \cref{lemma:long} we know that $A_{s, v}$ is no less than the length of the shortest $s$-$v$ path. Since the shortest $s$-$v$ path has a hop smaller than $h ^ {\prime} \le H$, it passes the condition check on line \ref{line:if1} and therefore $(s, v)$ must have been extended at some point in the algorithm. When $(s, v)$ was extended, since $X[s, v, t, j] = 1$ and $h ^ {-1}(\abs{\Pi_o[s, v]}) \le j$, we must have used $\Pi_o[s, v] \circ \Pi_o[v, t]$ to update $\Pi_o[s, t]$ inside the for loop starting on line \ref{line:fory}. Since we assumed that $\Pi_o[s, v]$ becomes the shortest path after $\Pi_o[v, t]$. When we extend $(s, v)$, both $\Pi_o[s, v]$ and $\Pi_o[v, t]$ are already the shortest paths. Thus, $\Pi_o[s, t]$ will be assigned the shortest $s$-$t$ path on line \ref{line:getpathb}. It is easy to see that $\Pi_o[s, t]$ will remain the shortest $s$-$t$ path for the rest of the procedure.
     
    Therefore, for every $(s, t) \in \ver ^ 2$ such that the shortest $s$-$t$ path has a hop at most $H$, ${\Pi_o}[s, t]$ is the shortest $s$-$t$ path. From a similar argument as the one in the proof of \cref{lemma:basiccorrectness}, with high probability, for every $(s, t) \in \ver ^ 2$, $A_{s, t}$ is the length of the shortest $s$-$t$ path.
\end{proof}

\subsubsection{Running Time Analysis}
The next three lemmata limit the running time of \cref{alg:update}.
\begin{lemma} \label{lemma:passsecondary}
    With high probability, if a pair of vertices $(s, t) \in \ver ^ 2$ passes the condition check on line \ref{line:if1}, $(s, t)$ is a secondary pair of vertices.
\end{lemma}
\begin{proof}
    Suppose the shortest path for a pair of vertices $(s, t) \in \ver ^ 2$ has a hop more than $H$. From \cref{lemma:long}, with high probability, $A_{s, t}$ is no greater than the length of the shortest path. Thus it is impossible for $(s, t)$ to pass the condition check on line \ref{line:if1}. Thus with high probability, if $(s, t)$ the condition check on line \ref{line:if1}, the shortest $s$-$t$ path has a hop at most $H$, and therefore $(s, t)$ is primary or secondary.

    If $(s, t)$ is primary, then ${\Pi} ^ {\prime}[s, t]$ is already the shortest path. Therefore, the condition on line \ref{line:if4} must be false and $\Pi_o[s, t]$ gets assigned ${\Pi} ^ {\prime}[s, t]$ on line \ref{line:getpath}. Since $\Pi_o[s, t]$ is already the shortest, the conditions on line \ref{line:if2} and \ref{line:if3} will never be satisfied for this pair and therefore $(s, t)$ will never be added to $Q$. Therefore $(s, t)$ will not even enter line \ref{line:if1}.
\end{proof}

\begin{lemma} \label{lemma:noduplicate}
    Any pair of vertices $(s, t) \in \ver ^ 2$ is extracted from $Q$ at most once.
\end{lemma}
\begin{proof}
    If a pair of vertices $(s, t) \in \ver ^ 2$ is extracted from $Q$, $w(\Pi_o[s, t])$ is the shortest among all $(s, t) \in Q$. Since the edge weights are nonnegative, all future additions to $Q$ will be some pair of vertices $(s ^ {\prime}, t ^ {\prime})$ with $w(\Pi_o[s ^ {\prime}, t ^ {\prime}]) \ge w(\Pi_o[s, t])$. However, one needs an improvement on $\Pi_o[s, t]$ to add $(s, t)$ to $Q$ again, which is not possible. Therefore, any pair of vertices $(s, t) \in \ver ^ 2$ is extracted from $Q$ at most once.
\end{proof}

\begin{lemma}
    \Cref{alg:update} runs in $\tilde O(n ^ {2.5})$ time with high probability.
\end{lemma}
\begin{proof}
    If the condition on \ref{line:if1} is satisfied for a pair of vertices $(s, t) \in \ver ^ 2$, from \cref{lemma:passsecondary}, $(s, t)$ is a secondary pair. $(s, t)$ will only be extended once due to \cref{lemma:noduplicate}. Note that since the algorithm is correct with high probability due to \cref{lemma:finalcorrectness}, when $(s, t)$ gets extended, $\Pi_o[s, t]$ is already the shortest path with high probability. Thus with high probability, on line \ref{line:getj}, $j$ becomes the hop level of the shortest $s$-$t$ path. Since the number of candidates for $x$ and $y$ on line \ref{line:forx} and on line \ref{line:fory} are both $\sum_{j ^ {\prime} \in [j, i_h]}O(n / h_{j ^ {\prime}}) = O(n / h_j)$ with high probability, the running time for its extension will be $\tilde O(n / h_j)$ with high probability. Since the update cost $c(s, t, j) = n / h_j$, with high probability, the total running time for extension is no more than the total update cost of the secondary paths up to a logarithmic factor, which from \cref{lemma:oracleupdatecost} is $\tilde O(n ^ {2.5})$.

    It is easy to verify that the rest of \cref{alg:update} runs in $\tilde O(n ^ {2.5})$ time. Therefore, the running time of \cref{alg:update} is $\tilde O(n ^ {2.5})$ with high probability.
\end{proof}

\section{Space Efficiency and Tie-breaking}
\subsection{Space Efficiency} \label{sec:space}
There are two major obstacles to achieving space efficiency. Firstly, we need to implement the random variables $X[*, *, *, *]$ within $\tilde O(n ^ 2)$ space while still being able to implement all operations involving these random variables efficiently. Secondly, we need to efficiently store the paths involved in our algorithm while still being able meet the requirements in \cref{sec:po}.

To achieve space efficiency, we have to implement the random variables $X[*, *, *, *]$ within $\tilde O(n ^ 2)$ space while still being able to implement all operations involving these random variables efficiently. To do this, instead of keeping the random variables constant throughout the update procedure, every time we access the random variable $X[s, v, t, j]$ for $(s, v, t, j) \in \ver ^ 3 \times [0, i_h]$, we resample the 0-1 random variable to be $1$ with probability $\frac{100\log_2{n}}{h_j} = \tilde O(1 / h_j)$. This way we do not actually need to store the variables themselves.

At the end of the update procedure, the random variables take values sampled at different times. To prove the correctness, at each update, for each $(s, v, t, j) \in \ver ^ 3 \times [0, i_h]$, we take a \emph{snapshot} of $X[s, v, t, j]$ that is equal to the value of this random value at some time in the update procedure. The way to take the snapshot is as follows:
\begin{itemize}
    \item If the hop level of the shortest $s$-$t$ path is not $j$, or if the shortest $s$-$t$ path is not the concatenation of the shortest $s$-$v$ path and the shortest $v$-$t$ path, take the snapshot arbitrarily.
    \item If both $(s, v)$ and $(v, t)$ are primary (i.e. $v$ is a type I candidate), let the snapshot be the value sampled on line \ref{line:recoveryhittingset} of \cref{alg:updateconcatenation}.
    \item If at least one of $(s, v)$ and $(v, t)$ is secondary (i.e. $v$ is a type II candidate), if in \cref{alg:update}, $\Pi_o[s, v]$ gets assigned the shortest $s$-$v$ path after $\Pi_o[v, t]$ gets assigned the shortest $v$-$t$ path, let the snapshot be the value sampled on line \ref{line:fory} in \cref{alg:update}, otherwise let the snapshot be the value sampled on line \ref{line:forx} in \cref{alg:update}.
\end{itemize}
It is easy to see that the execution of our update algorithm is correct as long as it would be correct if random variables were pre-sampled to be equal to their snapshots at the start of the execution and remained the same throughout the execution. Since the snapshots have the same distribution as the random variables themselves, and the pre-sample algorithm has been shown to be correct, the new sample-on-the-spot algorithm must also be correct.

Clearly, this new implementation can be done in $\tilde O(n ^ 2)$ space as we do not need to store any information. To tackle the operations involving these random variables efficiently, we note that there are two types of operations that we need to implement:
\begin{itemize}
    \item Type I: given some $(s, t, j) \in \ver ^ 2 \times [0, i_h]$, efficiently enumerate all $v$ such that $X[s, v, t, j] = 1$. (e.g. line \ref{line:recoveryhittingset} of \cref{alg:updateconcatenation}).
    \item Type II: given some $(s, t, j) \in \ver ^ 2 \times [0, i_h]$, efficiently enumerate all $v$ such that $\sum_{j ^ {\prime} \in [j, i_h]}X[s, v, t, j ^ {\prime}] \ge 1$. (e.g. line \ref{line:fory} of \cref{alg:update}, with parameters in a different order).
\end{itemize}
Since we now resample the random variables on the spot, Type I operations require us to randomly sample a subset of $\ver$ where each vertex is sampled with probability $\frac{100\log_2{n}}{h_j}$, and Type II operations require us to randomly sample a subset of $\ver$ where each vertex is sampled with probability $1 - \prod_{j ^ {\prime} \in [j, i_h]}(1 - \frac{100\log_2{n}}{h_{j ^ {\prime}}})$. It is well-known that both operations can be done efficiently with a running time proportional to the size of the output subset (e.g. by sampling the subset size from a Poisson distribution and then performing a sampling without replacement).

To efficiently store the paths involved in our algorithm while still being able to access the nodes in order efficiently as required in \cref{sec:po}, we can see that all the paths involved in our main algorithm are either of the following:
\begin{enumerate}
    \item A path on the rooted tree storing the outputs of \cref{alg:sshdp}.
    \item A concatenation of two other paths.
\end{enumerate}
To store the first type of path we simply store the pointers to the starting and ending nodes on the rooted tree. To access the nodes efficiently, we simply traverse the path on the tree. To efficiently store and access the second type of path, for each path, if one of the paths being concatenated contains no edges, the path is simply the other path being concatenated; otherwise, we simply store two pointers to the two paths that are concatenated. Since we never point to a path with no edges, we can easily access the nodes on the second type of path in linear time by recursion. Finally, to find the length and the hop of a path of the first type, we can use standard techniques on rooted trees. One easy way to do this as follows: Observe that the starting and ending nodes for paths involved in the data structure are always such that one is an ancestor of the other. Therefore, both the hop and the length are simply the difference between the hops and the lengths of two paths starting from the root. We can simply store the hops of lengths of paths from the root to all nodes on the tree in $O(n)$ space and answer queries in $O(1)$ time. To be able to find the length and the hop of a path of the second type in $O(1)$ time, we can simply use $O(1)$ space per path to store the hop and length of each path, and every time we concatenate two paths in the data structure, we compute the hop and the length of the new path as the sum of the hops and lengths of the two paths being concatenated.

The other parts of our data structure can clearly be implemented using $\tilde O(n ^ 2)$ space.

\subsection{Tie-Breaking}\label{sec:breaktie}
In this paper, we have assumed that for every $h$, the $h$-hop shortest paths are unique. Previously in \cite{demetrescu04}, a way of achieving this uniqueness even in the deterministic case without loss of efficiency for shortest paths is introduced by using ``extended weights.'' Since our algorithm is randomized, we will use a more straightforward randomized method based on random perturbation. The perturbation is introduced in the following way: for each edge $e$ of weight $w$, we change its weight to $$w \times n ^ {10} + n ^ {9} + \lambda(e),$$ where $\lambda(e)$ is a uniformly random integer in $[0, n ^ 8)$. We call a preprocessing ``major'' if the entire data structure is preprocessed from the top layer. Our algorithm's major preprocessing steps are $\Theta(n ^ {0.5})$ updates apart. At each major preprocessing, we re-introduce random perturbation on every edge present on the graph. The perturbation is erased at the next major preprocessing before a new perturbation is introduced. 

It is easy to see that a hop-restricted shortest path in the new graph must be a corresponding hop-restricted shortest path in the original graph. We will show that with high probability the hop-restricted shortest paths after the perturbation are unique. For every $(s, t, h) \in \ver ^ 2 \times [0, n - 1]$, let $d[s, t, h]$ be the length of the shortest $s$-$t$ path with a hop of exactly $h$. We have the following:
\begin{lemma} \label{lemma:uniqueprec}
    For any given graph $G$, if we perturb the edge weights in the way above, with a probability of $1 - n ^ {-3}$, for every $(s, t, h) \in \ver ^ 2 \times [1, n - 1]$, there is only one vertex $t ^ {prime}$ such that $d[s, t ^ {\prime}, h - 1] + w(t ^ {\prime}, t) = d[s, t, h]$.
\end{lemma}
\begin{proof}
    Given $(s, t, h) \in \ver ^ 2 \times [1, n - 1]$, note that from our perturbation, the values $\{(d[s, t ^ {\prime}, h - 1] + w(t ^ {\prime}, t)) \mod n ^ 8 \mid t \in \ver \backslash t ^ {\prime}\}$ are $n - 1$ uniformly random integers in $[0, n ^ 8)$, and by the union bound the probability that these $n - 1$ random integers are distinct is at least $1 - 1 / n ^ 6$. If these values are distinct, the values $\{d[s, t ^ {\prime}, h - 1] + w(t ^ {\prime}, t) \mid t \in \ver \backslash t ^ {\prime}\}$ must also be distinct, and therefore there is only one minimum.

    The claim follows from the union bound on all $n ^ 2(n - 1) \le n ^ 3$ tuples.
\end{proof}

Now we prove that with high probability the hop-restricted shortest paths are unique.
\begin{lemma}
    For any given graph $G$, if we perturb the edge weights in the manner above, with a probability of $1 - n ^ {-3}$, for every $(s, t, h) \in \ver ^ 2 \times [1, n - 1]$, the shortest $h$-hop-restricted $s$-$t$ path is unique.
\end{lemma}
\begin{proof}
    We argue that as long as the condition in \cref{lemma:uniqueprec} holds, for every $(s, t, h)$, the shortest $h$-hop-restricted $s$-$t$ path is unique.  

    Suppose for some $(s, t, h) \in \ver ^ 2 \times [1, n - 1]$ there are two distinct shortest $h$-hop-restricted $s$-$t$ paths $p_0$ and $p_1$. Due to the way we perturb the edge weights, $\abs{p_0} = \floor{(w(p_0) \mod n ^ {10}) / n ^ 9} = \floor{(w(p_1) \mod n ^ {10}) / n ^ 9} = \abs{p_1}$. Suppose $i \in [1, \abs{p_0} - 1]$ is the largest index such that $p_0[i] \ne p_1[i]$. Let $t ^ {\prime} = p_0[i + 1] = p_1[i + 1]$. We have $d[s, t ^ {\prime}, i + 1] = w(p_0[0, i + 1]) = d[s, p_0[i], i] + w(p_0[i], t ^ {\prime})$ and $d[s, t ^ {\prime}, i + 1] = w(p_1[0, i + 1]) = d[s, p_1[i], i] + w(p_1[i], t ^ {\prime})$, which is a contradiction to the condition in \cref{lemma:uniqueprec}.

    Since \cref{lemma:uniqueprec} holds with probability $1 - n ^ {-3}$, with probability $1 - n ^ {-3}$, for every $(s, t, h) \in \ver ^ 2 \times [0, n - 1]$, the shortest $h$-hop-restricted $s$-$t$ path is unique.
\end{proof}

\begin{lemma}
    The uniqueness condition holds for all graphs and sub-graphs involved in the data structure between two major preprocessing steps with probability at least $1 - 1 / n ^ {1.49}$.
\end{lemma}
\begin{proof}
    Between the major preprocessing steps, there are $O(n ^ {0.5})$ updates, and for each update, we preprocess the $O(\log n)$ layers, and therefore there are $\tilde O(n ^ {0.5})$ preprocessing steps. Each preprocessing involves $\tilde O(n)$ sub-graphs for which we need to ensure that the uniqueness condition holds. Therefore by the union bound, the uniqueness condition holds for all these graphs with probability at least $1 - 1 / n ^ {1.49}$.
\end{proof}

\bibliographystyle{alphaurl} 
\bibliography{main}

\end{document}